\newcommand{\ARXIV} 
\DeclarePairedDelimiter\ceil{\lceil}{\rceil}
\DeclarePairedDelimiter\abs{\lvert}{\rvert}
\DeclarePairedDelimiter\norm\lVert\rVert
\DeclareMathAlphabet{\mathcal}{OMS}{cmsy}{m}{n}
\newcommand{\eg}[0]{\emph{e.g.,}\xspace}
\newcommand{\ie}[0]{\emph{i.e.,}\xspace}
\newcommand{\cf}[0]{\emph{cf.}\xspace}
\newcommand{\etal}[0]{\emph{et~al.}\xspace}
\newtheorem{rem}[theorem]{Remark}
\newtheorem{prob}[theorem]{Problem}
\newtheorem{property}[theorem]{Property}
\newcommand{\term}[1]{\emph{#1}}
\newcommand{\myparagraph}[1]{\noindent \textbf{#1}\xspace}
\newcommand{\QRset}[0]{\ensuremath{\mathcal{Q}}\xspace} 
\newcommand{\Cset}[0]{\ensuremath{\mathcal{C}}\xspace} 
\renewcommand{\c}[1]{\ensuremath{c_{#1}}\xspace} 
\newcommand{\Dset}[0]{\ensuremath{D}\xspace} 
\newcommand{\DNset}[0]{\ensuremath{D'}\xspace}  
\renewcommand{\d}[1]{\ensuremath{d_{#1}}\xspace} 
\newcommand{\Gset}[0]{\ensuremath{\mathcal{G}}\xspace} 
\newcommand{\g}[1]{\ensuremath{g_{#1}\xspace}} 
\newcommand{\Hb}[0]{\ensuremath{\mathbf{H}}\xspace} 
\newcommand{\Hn}[0]{\ensuremath{\mathbf{H'}}\xspace} 
\newcommand{\gstar}[0]{\ensuremath{g^\star}\xspace} 
\newcommand{\QR}[0]{\ensuremath{QR}\xspace} 
\newcommand{\qr}[1]{\ensuremath{Q_{#1}}\xspace} 
\renewcommand{\a}[1]{\ensuremath{\alpha_{#1}}\xspace} 
\newcommand{\Alpha}[0]{\ensuremath{\bm{\alpha}}\xspace} 
\newcommand{\Y}[1]{\ensuremath{Y_{#1}}\xspace} 
\newcommand{\Yb}[0]{\ensuremath{\mathbf{Y}}\xspace} 
\newcommand{\GS}[1]{\ensuremath{\Delta_{#1}}\xspace} 
\newcommand{\M}[0]{\ensuremath{\mathcal{M}}\xspace}
\newcommand{\Exp}[1]{\ensuremath{\mathbb{E}\left[#1\right]}}
\newcommand{\Prob}[1]{\ensuremath{\mathrm{Pr}(#1)}}
\newcommand{\eat}[1]{}
\newcommand{\reals}{\ensuremath{\mathbb{R}}\xspace}
\newcommand{\EE}[0]{\textsc{E2EPriv}\xspace}
\newcommand{\best}[0]{\textsc{BestNonPriv}\xspace}
\newcommand{\heuristic}[0]{\textsc{Heuristic}\xspace}
\newcommand{\leaky}[0]{\textsc{Leaky}\xspace}
\begin{document}
	\sloppy
	\allowdisplaybreaks 

\title{End-to-End Differentially-Private \\ Parameter Tuning in Spatial Histograms}
\renewcommand{\shorttitle}{End-to-End Differentially-Private Parameter Tuning in Spatial Histograms} 


\author{Maryam Fanaeepour}
\orcid{0000-0003-2654-982X}
\affiliation{%
\institution{School of Computing and Information System\\
	The University of Melbourne, Australia}
}
\email{mfanaeepour@unimelb.edu.au}

\author{Benjamin I. P. Rubinstein}
\orcid{0000-0002-2947-6980}
\affiliation{%
	\institution{School of Computing and Information System\\
		The University of Melbourne, Australia}
}
\email{brubinstein@unimelb.edu.au}



\begin{abstract}
Differentially-private histograms have emerged as a key
tool for location privacy. While past mechanisms have 
included theoretical \& experimental analysis, it has
recently been observed that much of the existing literature does
not fully provide differential privacy. The missing component,
private parameter tuning, is necessary for rigorous
evaluation of these mechanisms. Instead works frequently tune
on training data to optimise parameters without consideration
of privacy; in other cases selection is performed arbitrarily
and independent of data, degrading utility. We address this
open problem by deriving a principled tuning mechanism that
privately optimises data-dependent error bounds. Theoretical
results establish privacy and utility while extensive 
experimentation demonstrates that we can practically achieve
true end-to-end privacy.
\end{abstract}


%
%
\begin{CCSXML}
	<ccs2012>
	<concept>
	<concept_id>10002951.10003227.10003236.10003101</concept_id>
	<concept_desc>Information systems~Location based services</concept_desc>
	<concept_significance>500</concept_significance>
	</concept>
	<concept>
	<concept_id>10002978.10002991.10002995</concept_id>
	<concept_desc>Security and privacy~Privacy-preserving protocols</concept_desc>
	<concept_significance>300</concept_significance>
	</concept>
	<concept>
	<concept_id>10002978.10003018.10003019</concept_id>
	<concept_desc>Security and privacy~Data anonymization and sanitization</concept_desc>
	<concept_significance>300</concept_significance>
	</concept>
	<concept>
	<concept_id>10003752.10010070.10010111.10011735</concept_id>
	<concept_desc>Theory of computation~Theory of database privacy and security</concept_desc>
	<concept_significance>300</concept_significance>
	</concept>
	</ccs2012>
\end{CCSXML}

\ccsdesc[500]{Information systems~Location based services}
\ccsdesc[300]{Security and privacy~Privacy-preserving protocols}
\ccsdesc[300]{Security and privacy~Data anonymization and sanitization}
\ccsdesc[300]{Theory of computation~Theory of database privacy and security}


\keywords{Differential Privacy, 
Histograms, Location Privacy}

\maketitle

\section{Introduction}\label{sec:intro}

Location data is used widely, from ride-sharing apps in
consumer mobile to traffic management in urban planning.
But the utility of location analytics must be balanced with
concerns over user privacy. A leading framework for strong privacy
guarantees suitable to the setting, is differential
privacy~\cite{Cynthia2006,privacybook14}. Many authors have studied the
release of spatial data structures to untrusted third parties,
for accurate response to range queries under differential
privacy~\cite{Cormode12PSD,ChenFDS12,QardajiYL13,QardajiHeirarchy2013,HeCMPS15,Bodies2016}.
However a recent large-scale analysis~\cite{Hay16} has discovered
that reported evaluations in previous work have parameter-tuned
non-privately, undermining the validity of much prior work.
In this paper, we develop private tuning of spatial histograms
through optimising privatised data-dependent error bounds,
addressing the gap on end-to-end privacy (\cf Figure~\ref{fig:E2E_process}).

\begin{figure}[t!]
	\centering
	\includegraphics[width = .9\columnwidth]{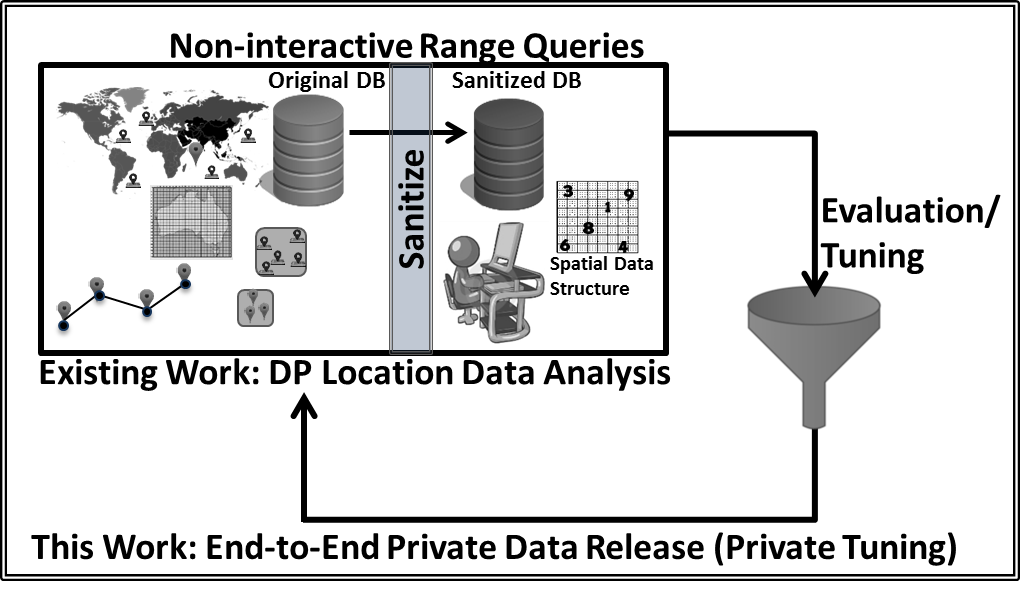}
	\caption{End-to-end private spatial data structures.}
	\label{fig:E2E_process}
\end{figure}

Aggregation has been used extensively 
for efficient range query responses, and as a strategy for
qualitative privacy~\cite{Chawla2005}. Differential privacy 
complements such approaches by addressing attacker background
knowledge; to date, various spatial data structures have been
adopted for private spatial data mining~\cite{Cormode12PSD,ChenFDS12,QardajiYL13,QardajiHeirarchy2013,HeCMPS15,Bodies2016}.

As research has established that utility is highly parameter
dependant~\cite{QardajiYL13,QardajiHeirarchy2013},
parameters must be tuned on data, and therefore privately.
Unfortunately, as documented recently~\cite{Hay16}, many past
works on histogram release establish differential privacy of 
mechanisms while ignoring privacy during tuning. The
\textsc{DPBench} framework~\cite{Hay16}
presented, articulates as an open problem the need for
end-to-end privacy for truly privacy-preserving mechanisms
and fair, rigorous evaluations.

In this paper, we address this problem by optimising privatised
data-dependent error bounds that quantify the effect of data
structure parameters. Our focus is releasing histograms, as these
are the most widely used and effective spatial data
structures~\cite{QardajiYL13}.

Our mechanism consists of runs over two phases: 1) among all values
for the parameter, one is selected privately that is close in
utility to an optimum with high probability; 2) the data
structure is constructed \& released privately using the selected
parameter. The main challenge is bounding utility of phase two 
with respect to phase one's parameter selection, and doing so privately. 
We consider range query relative error~\cite{GarofalakisG02,VitterW99}
as our objective when choosing histogram grid size. Our bounds on
this error decompose into two errors, through a principled analysis:
\term{aggregation} error due to the (common) use of the
uniformity assumption for aggregated counts when data is
non-uniformly distributed; and \term{perturbation} error due 
to count perturbation for phase two differential privacy.

\myparagraph{Contributions.}
Our main contributions include
\begin{compactitem}
	\item For the first time, a solution to end-to-end differentially-private parameter tuning for spatial data structure release;
	\item A two-phase mechanism for private parameter tuning and data structure construction;
	\item Guarantees on differential privacy and utility;
	\item Extensive experimental confirmation that our mechanism is the new state-of-art for private accurate histograms.
\end{compactitem}

\section{Related Work}\label{sec:related_work}
Numerous proposals have sought to address the challenge of
private location-based
services~\cite{Ghinita2013}. Aggregation has widely been used
as a qualitative privacy approach, by reporting aggregate numbers
of objects per partition cell in response to range
queries~\cite{Braz2007, Tao2004,Lopez2005,LeonardiORRSAA14,CASE2015}.
Differential privacy~\cite{Cynthia2006,privacybook14} has also been
adopted as a semantic definition for privacy when releasing
structures to untrusted third parties. To achieve high
utility, different variants of data structures have been
explored~\cite{QardajiYL13,Cormode12PSD,ChenFDS12,HeCMPS15,Bodies2016}, such as spatial grid histograms,
quad-trees, kd-trees for point locations, for
trajectories, as well as user
regions.

For each data structure, selection of parameters such as
grid size or levels of hierarchies, is known to be of the utmost
importance in affecting utility~\cite{QardajiYL13,QardajiHeirarchy2013,PrivTree2016}.
The authors in~\cite{QardajiYL13} propose
Equation~\eqref{eq:heuristic} as a guideline for selecting grid
size when releasing differentially-private grid-partitioned
synopses: 
\begin{eqnarray}
	m &=& \sqrt{\frac{N\epsilon}{c}}\enspace,\label{eq:heuristic}
\end{eqnarray}
where $m$ is the selected grid size per direction, $N$ is the
number of data points, $\epsilon$ is the total privacy budget
and $c$ is a constant depending on the dataset. Their stated
motivation is to balance perturbation (noise) error and
aggregate (non-uniformity) error, and while they analyse each
error component, their combination is performed without
rigorous justification. Moreover, the authors tune $c$ on their sensitive
experimental datasets, simultaneously undermining: $c$'s definition
as a constant, potentially leaking privacy, and overfitting their
structures to test data. We refer to this grid selection approach
as \heuristic in experiments (\cf Section~\ref{sec:experiment}).

It has been noted that once a parameter is already 
tuned non-privately on past sensitive data, that parameter can be
used safely on future unrelated
datasets~\cite{Hay16}. However, such fixed schemes still eschew optimisation by
data-dependence. Not all datasets exhibit the same levels of 
uniformity, point distribution or domain, as discussed in
Section~\ref{sec:heuristic}. Such approaches like \heuristic
obfuscate the non-privacy of tuning $m$ by secretly tuning $c$
(or some other constant in the fixed rule). Any tuning must be
privacy preserving.

The key challenge for this line of research, is that parameter
selection must be data dependent but still preserve privacy. In
machine learning, private hyper parameter tuning has been
explored~\cite{Chaudhuri11,Chaudhuri13} using cross validation.
However, cross validation leverages split test \& train data,
as it aims to mitigate \emph{future} generalisation error. Here we wish
to make use of all data in all stages and are ultimately concerned
with range queries against this same dataset. The two domains are
related but pose fundamentally distinct challenges.

In~\cite{DAWA14}, a private parameter selection mechanism, for
1D data, is developed using dynamic programming. While it is
speculated that the approach extends to 2D data via reducing
2D structures to 1D with space filling curves, such curves
do not preserve spatial locality in general. As a result it is 
relatively easy to construct counter examples to such extensions.


A principled evaluation for differentially-private algorithms is
reported recently in~\cite{Hay16}. The \textsc{DPBench} framework
asserts that end-to-end privacy is quite necessary, and highlights parameter tuning as
a key open problem for many existing mechanisms. We are motivated
by their call, and address the problem with our end-to-end
private approach for tuning and histogram construction.

\section{Preliminaries and Definitions}\label{sec:preliminary}

\ifdefined\ARXIV
In Table~\ref{table:symbols}, summary of notations and symbols used throughout
this paper are described.

\begin{table}[h!]
	\caption{Summary of symbols used in this paper.}\label{table:symbols}
	\centering
	\resizebox{0.95\columnwidth}{!}{
		\bigskip
		\begin{tabular}{| c | c |}
			\hline
			\textbf{Symbols} & \textbf{Description} \\
			\hline
			\hline
			\Dset & original dataset of points\\
			\hline
			\DNset & neighbour dataset with \Dset, differing in one record\\
			\hline
			\QRset & set of query regions \\
			\hline
			t, \qr{t} & specification of a query region, including shape, size and position, $t \in $ \QRset\\
			\hline
			\Cset & set of cells \\ 
			\hline
			\c{i} & count value for the $ i $-th cell component, $i \in$ \Cset \\
			\hline
			\d{i} & true number of data/points in cell $ i $, within \QR, i $\in$ \Cset 
			\\
			\hline
			\a{i} & fraction of the overlapping area of \QR with cell $i$, \a{i} $\in (0,1]$
			\\
			\hline
			\Gset & set of grid sizes, \g{r}\\
			\hline
			\g{r} & size of a grid, number of divisions on each direction, $ r \in \Gset$\\
			\hline
			$ \rho $ & sanity bound for the relative error, computed for a dataset
			\Dset\\
			\hline
			$\Y{i}$ & noise added to cell $i$\\
			\hline
			$\lambda$ & scale parameter of Laplace mechanism\\
			\hline
			$\delta$ & a small value in $(0,1)$, used for the sanity bound  \\
			\hline
			$\epsilon, \epsilon_1, \epsilon_2$ & privacy parameters  \\
			\hline
			$c$ & constant in \heuristic approach \\
			\hline
	\end{tabular}}
\end{table}
\else
We now summarise background concepts and introduce notation.
\fi

\subsection{Spatial Data Structures}
\ifdefined\ARXIV
As discussed in Section~\ref{sec:related_work}, there is a wide range of spatial
data structures~\cite{samet2006} proposed for spatial object, from points,
path trajectories, to planar regions (bodies). \fi
Our focus on spatial histograms derives from their wide popularity in supporting
aggregate range queries. Originally developed for efficiency, histograms have
found application in qualitative privacy~\cite{Tao2004,Chawla2005}. 
Consider a dataset of points (locations), \Dset, where each record is a point.
Figure~\ref{fig:grid_histogram} displays a grid data structure of
points (Figure~\ref{fig:grid}) and the resulting spatial histogram \Hb of
counts $c_i$ per cell $i\in\Cset$ the set of cells (Figure~\ref{fig:histogram}). An aggregate range query is 
represented by a query region \QR (a red bolded rectangle in
Figure~\ref{fig:grid_histogram}), with corresponding responses as an
approximate count of points of \Dset that fall in that query region. We 
apply the \term{uniformity assumption} to quantify the contribution of a cell as the
cell count multiplied by the fraction of cell area in \QR (\cf
Section~\ref{sec:approach}).

\begin{figure}[h!]
	\centering
	\subfloat[Grid \label{fig:grid}]{
		\includegraphics[width = .3\columnwidth]{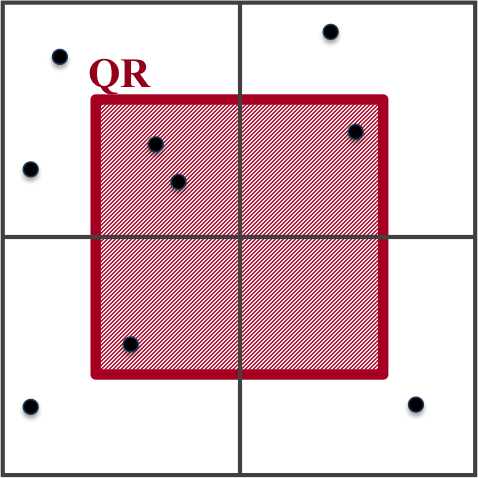}}
	\hspace{.4cm}
	\subfloat[Histogram \label{fig:histogram}]{
		\includegraphics[width = .3\columnwidth]{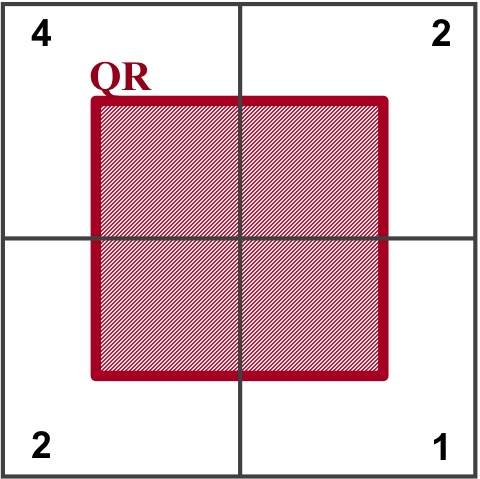}}
	\caption{Points, a $2 \times 2$ grid, and corresponding 
		histogram.} 
	\label{fig:grid_histogram}
\end{figure}

\subsection{Differential Privacy}

We adopt the differential privacy (DP)~\cite{Cynthia2006,privacybook14} framework
due to its strong guarantees on data privacy. 

\begin{definition} 
	Databases \Dset and \DNset that differ on exactly one record, with
	\DNset having one more than \Dset, are termed \term{neighbours}.
\end{definition}

\begin{definition}
	\label{def:dp}
	A randomised mechanism $\M$, preserves \term{$\epsilon$-differential
		privacy} for $\epsilon>0$, if for all neighbouring databases $\Dset, \DNset$
	and measurable $S \subseteq Range(\M)$:
	\begin{eqnarray*}
		\Prob{\M(\Dset)\in S} &\leq& \exp(\epsilon) \cdot \Prob{\M(\DNset)\in S}
		\enspace.
	\end{eqnarray*}
\end{definition}
Differential privacy requires that small changes to input (addition/deletion
of a record) do not significantly affect a mechanism's response distribution.
As such sampling from the mechanism's output cannot be used to distinguish the
input database.

\begin{lemma}[\cite{Cynthia2006}] 
	\label{lem:seq}
	Consider mechanisms $\M_i$ each providing $\epsilon_i$-differential privacy,
	then the release of the vector of mechanism's responses on database \Dset preserves
	$\sum_{i}\epsilon_i$-differential privacy.	
\end{lemma}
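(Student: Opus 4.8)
The plan is to work directly from Definition~\ref{def:dp}. I would fix an arbitrary pair of neighbouring databases $\Dset, \DNset$ and consider the joint mechanism $\M = (\M_1, \M_2, \ldots)$ whose output is the released vector of individual responses. The structural fact I would exploit is that the $\M_i$ each draw independent internal randomness when applied to the same input, so that the joint output law factorises as the product of the marginals of $\M_i(\Dset)$.

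First I would verify the claim on product (rectangle) events. For $S = S_1 \times S_2 \times \cdots$ with each $S_i \subseteq Range(\M_i)$ measurable, independence gives $\Prob{\M(\Dset) \in S} = \prod_i \Prob{\M_i(\Dset) \in S_i}$. Applying the $\epsilon_i$-differential privacy of each $\M_i$ factor-by-factor yields $\Prob{\M_i(\Dset) \in S_i} \leq \exp(\epsilon_i)\,\Prob{\M_i(\DNset) \in S_i}$; multiplying these and collecting the exponentials gives $\Prob{\M(\Dset) \in S} \leq \exp\bigl(\sum_i \epsilon_i\bigr) \prod_i \Prob{\M_i(\DNset) \in S_i} = \exp\bigl(\sum_i \epsilon_i\bigr)\Prob{\M(\DNset) \in S}$, which is precisely the target bound restricted to rectangles.

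It remains to lift from rectangles to an arbitrary measurable $S \subseteq Range(\M)$. In the discrete-output case this is immediate, since every event is a disjoint union of output tuples and summing the per-tuple inequality over $S$ suffices; equivalently I would bound the pointwise mass ratio $\Prob{\M(\Dset)=\xi}/\Prob{\M(\DNset)=\xi} = \prod_i \Prob{\M_i(\Dset)=\xi_i}/\Prob{\M_i(\DNset)=\xi_i} \leq \exp\bigl(\sum_i \epsilon_i\bigr)$ and integrate over $S$. The one genuinely delicate step, and the part I expect to be the main obstacle, is the same extension for continuous ranges: rectangles generate the product $\sigma$-algebra, but an inequality between two measures does not automatically propagate from a generating class to the whole $\sigma$-algebra. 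I would resolve this by passing to Radon--Nikodym densities of $\M_i(\Dset)$ with respect to $\M_i(\DNset)$, each bounded by $\exp(\epsilon_i)$, forming their product as the density of $\M(\Dset)$ against $\M(\DNset)$, and integrating over $S$; a $\pi$--$\lambda$ argument confirms that this product density indeed represents the joint law. Finiteness of the collection keeps $\sum_i \epsilon_i$ finite and the product of exponentials trivially well defined, so no further care is needed there.
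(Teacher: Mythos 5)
The paper does not prove this lemma---it is quoted from Dwork~\etal with a citation---so there is no in-paper argument to compare against; your proof is the standard one for (non-adaptive) sequential composition and it is correct. You are right to flag the measure-theoretic step as the only delicate point: an inequality $\mu(S)\le c\,\nu(S)$ verified on a generating $\pi$-system of rectangles does \emph{not} in general propagate to the full product $\sigma$-algebra, and your fix---observing that $\epsilon_i$-DP forces $\M_i(\Dset)\ll \M_i(\DNset)$ with Radon--Nikodym derivative bounded by $\exp(\epsilon_i)$ a.e., taking the product of densities as the joint density, and integrating over $S$---is the right one. One remark worth adding: as the paper actually invokes the lemma (Theorem~\ref{theorem:exp-privacy}), Phase~2's mechanism is run on the \emph{output} $\gstar$ of Phase~1, so the composition used is adaptive rather than independent. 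Your independence-based factorisation does not literally cover that case, but the proof adapts immediately by replacing the product of marginals with the chain-rule factorisation $\Prob{\M_1=r_1}\cdot\Prob{\M_2=r_2\mid \M_1=r_1}$ and bounding each conditional factor by $\epsilon_i$-DP (which holds uniformly over the conditioning value, since $r_1$ is just a parameter of the second mechanism); everything else in your argument, including the density step, goes through verbatim.
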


\begin{definition} 
	The \term{$L_1$-global sensitivity (GS)} of a
	deterministic, Euclidean-vector-valued function $f$ is given by $\Delta f =
	\max\limits_{\Dset,\DNset} \norm{f(D) -f(D')}_1$, taken over neighbouring databases.
\end{definition}

The simplest generic mechanism for differential privacy smooths non-private function 
sensitivity with additive perturbations.

\begin{theorem}[\cite{Cynthia2006}]\label{theorem:lap}
For any deterministic Euclidean-vector-valued $f(D)$, the \term{Laplace mechanism}
$M(\Dset)\sim Lap(f(\Dset), \Delta f / \epsilon)$ preserves $\epsilon$-differential
privacy.
\end{theorem}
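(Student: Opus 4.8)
The plan is to work directly with probability densities and reduce the differential-privacy inequality of Definition~\ref{def:dp} to a pointwise bound on the ratio of output densities. First I would fix the output dimension $k$, so that $f(\Dset)\in\reals^k$, and recall that the Laplace mechanism perturbs each coordinate independently by noise $\Y{i}\sim Lap(\lambda)$ with $\lambda=\Delta f/\epsilon$, the univariate Laplace density being $(2\lambda)^{-1}\exp(-\abs{x}/\lambda)$. Writing $p_\Dset$ and $p_{\DNset}$ for the densities of $\M(\Dset)$ and $\M(\DNset)$, independence across coordinates factors each into a product of $k$ univariate Laplace densities centred at the respective coordinates of $f(\Dset)$ and $f(\DNset)$.

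The heart of the argument is bounding the likelihood ratio at an arbitrary output point $z\in\reals^k$. Forming $p_\Dset(z)/p_{\DNset}(z)$, the normalising constants cancel and the product collapses to $\exp\bigl(\lambda^{-1}\sum_{i=1}^{k}(\abs{z_i-f(\DNset)_i}-\abs{z_i-f(\Dset)_i})\bigr)$. Applying the reverse triangle inequality coordinatewise gives $\abs{z_i-f(\DNset)_i}-\abs{z_i-f(\Dset)_i}\le\abs{f(\Dset)_i-f(\DNset)_i}$, so the exponent is at most $\lambda^{-1}\norm{f(\Dset)-f(\DNset)}_1$. By the definition of $L_1$-global sensitivity this distance is at most $\Delta f$ over neighbours, whence the ratio is bounded by $\exp(\Delta f/\lambda)=\exp(\epsilon)$ for our choice $\lambda=\Delta f/\epsilon$.

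It then remains to lift this pointwise density bound to the measurable-set statement: for any measurable $S\subseteq\reals^k$, integrating the inequality $p_\Dset(z)\le e^{\epsilon}p_{\DNset}(z)$ over $S$ yields $\Prob{\M(\Dset)\in S}\le e^{\epsilon}\,\Prob{\M(\DNset)\in S}$, which is exactly $\epsilon$-differential privacy.

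The step carrying the real content is the reverse-triangle-inequality bound linking the per-coordinate density ratio to the $L_1$ distance between $f(\Dset)$ and $f(\DNset)$; the rest is bookkeeping over the product form. The one subtlety I would be careful about is the direction of the inequality: the bound must be arranged so that the numerator density (that of $\Dset$) appears in the exponent with the correct sign, and since the neighbour relation contributes symmetrically to the global sensitivity, the same $\Delta f$ controls the $L_1$ gap regardless of which database is perturbed.
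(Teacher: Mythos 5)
Your proof is correct and is exactly the standard density-ratio argument for the Laplace mechanism; the paper itself gives no proof of Theorem~\ref{theorem:lap}, importing it directly from the cited reference, and your argument reproduces the canonical proof from that source (factor the product of univariate Laplace densities, bound the likelihood ratio pointwise via the triangle inequality and the $L_1$ sensitivity, then integrate over the measurable set). No gaps.
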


Another important mechanism enables release from arbitrary sets that need not be 
numeric. 

\begin{theorem}[\cite{McSherry07}] \label{theorem:exp}
Consider a \term{score function} (or quality, utility function)
$s(\Dset, r)\in\mathbb{R}$ for database \Dset and response $r\in\mathcal{R}$. 
Then the \term{exponential mechanism} that outputs response $r$ with probability
\begin{eqnarray}
\Prob{M(s, \Dset) = r} = \frac{\exp(\epsilon \cdot s(\Dset,
	r)/2\Delta)}{\sum_{r \in \mathcal{R}} \exp(\epsilon \cdot s(\Dset, r)/2\Delta)}\enspace,
\label{eq:exp}
\end{eqnarray}
preserves $\epsilon$-differential privacy for $\epsilon>0$ and $\Delta=\Delta s$.
\end{theorem}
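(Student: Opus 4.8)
The plan is to verify Definition~\ref{def:dp} directly by establishing, for every response $r\in\mathcal{R}$ and every pair of neighbouring databases $\Dset,\DNset$, that the likelihood ratio $\Prob{M(s,\Dset)=r}/\Prob{M(s,\DNset)=r}$ is at most $\exp(\epsilon)$. A pointwise bound of this form suffices: since the response set is discrete (or the output a density), the inequality lifts to any measurable $S\subseteq Range(M)$ by summing (resp.\ integrating) both sides over $S$.

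First I would substitute the defining expression~\eqref{eq:exp} into the ratio and split it into two factors: factor (i), the ratio of the numerators $\exp(\epsilon\, s(\Dset,r)/2\Delta)/\exp(\epsilon\, s(\DNset,r)/2\Delta)$; and factor (ii), the inverted ratio of the normalising partition functions, namely $\sum_{r'}\exp(\epsilon\, s(\DNset,r')/2\Delta)$ divided by $\sum_{r'}\exp(\epsilon\, s(\Dset,r')/2\Delta)$. This algebraic split is purely a matter of cancelling like terms in the quotient of two ratios.

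For factor (i), I would invoke the definition $\Delta=\Delta s$ to conclude $\abs{s(\Dset,r)-s(\DNset,r)}\le\Delta$, so that the exponent is at most $\epsilon\Delta/2\Delta=\epsilon/2$ and hence factor (i) $\le\exp(\epsilon/2)$. For factor (ii), I would bound each summand in the numerator termwise, using the same sensitivity bound to write $\exp(\epsilon\, s(\DNset,r')/2\Delta)\le\exp(\epsilon/2)\exp(\epsilon\, s(\Dset,r')/2\Delta)$ for every $r'$; summing over $r'$ shows the numerator is at most $\exp(\epsilon/2)$ times the denominator, so factor (ii) $\le\exp(\epsilon/2)$ as well. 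Multiplying the two bounds gives $\exp(\epsilon/2)\cdot\exp(\epsilon/2)=\exp(\epsilon)$, which is exactly the condition of Definition~\ref{def:dp}.

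The main subtlety—and the reason for the factor of $2$ in the exponent denominator $2\Delta$—is that the sensitivity of $s$ is charged twice: once directly, through the chosen response's score in factor (i), and once indirectly, through the perturbation of the normalising constant in factor (ii). The halving to $\epsilon/2$ in each factor is precisely what makes the two contributions combine to $\exp(\epsilon)$ rather than $\exp(2\epsilon)$, so getting this two-sided accounting right is the only real obstacle in the argument.
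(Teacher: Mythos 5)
Your proof is correct and is the standard argument for the exponential mechanism (splitting the likelihood ratio into the numerator factor and the inverted normalising-constant factor, each bounded by $\exp(\epsilon/2)$ via the sensitivity of $s$). The paper itself states this result without proof, citing McSherry--Talwar, and your argument coincides with the proof given in that cited literature, so there is nothing to reconcile.
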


The exponential mechanism is typically used with
$\Delta s=\sup_{r\in\mathcal{R}} \Delta s(\cdot, r)$. However, using response 
dependent sensitivity per term achieves the same privacy, with potentially better utility:

\begin{definition} 
	\term{Response-dependent sensitivity} is $\GS{r}=\Delta s(\cdot, r)$.
\end{definition}

\section{Problem Statement}\label{sec:problem}

We seek to address the problem of parameter tuning 
spatial histograms in an end-to-end differentially-private setting
(\cf Figure~\ref{fig:E2E_process}).

\begin{prob}\label{prob:main}
Given point-set \Dset, a set of query regions \QRset, budget
$\epsilon>0$, our goal is to batch process \Dset to produce a data
structure that can respond to an unlimited number of range queries
through privately selecting a grid size from given set \Gset that optimises response
accuracy on queries \QRset, while preserving $\epsilon$-differential
privacy.
\end{prob}

\subsection{Evaluation Metrics}\label{sec:metrics}

Specifically, solutions should have the following properties:

\begin{property}[End-to-End Differential Privacy] 
	Mechanisms should achieve non-interactive
	differential privacy not only in the release of a data
	structure based on spatial data but also in parameter
	tuning \eg grid size selection, of the structure.
\end{property}

\begin{property} [Utility: Low Relative Error]
	Mechanisms should achieve low total error on future
	query regions \QR, as measured by relative error
	$\left|response(\QR) - true(\QR)\right|/true(\QR)$. 
\end{property}

\begin{property} [Efficiency: Low Computational Complexity]
	Mechanisms should enjoy low computational time complexity
	in terms of key parameters of the data and geographic area.
\end{property}

\myparagraph{Error trade-off.}
We expect a trade-off between two sources of error as depicted
in Figure~\ref{fig:errors}, illustrating the need to tune grid 
size: aggregation error due to failure of the uniformity 
assumption when aggregating for qualitative privacy; perturbation 
error due to count noise introduced for differential privacy.
\begin{figure}[t]
	\centering
	\includegraphics[scale = .26]{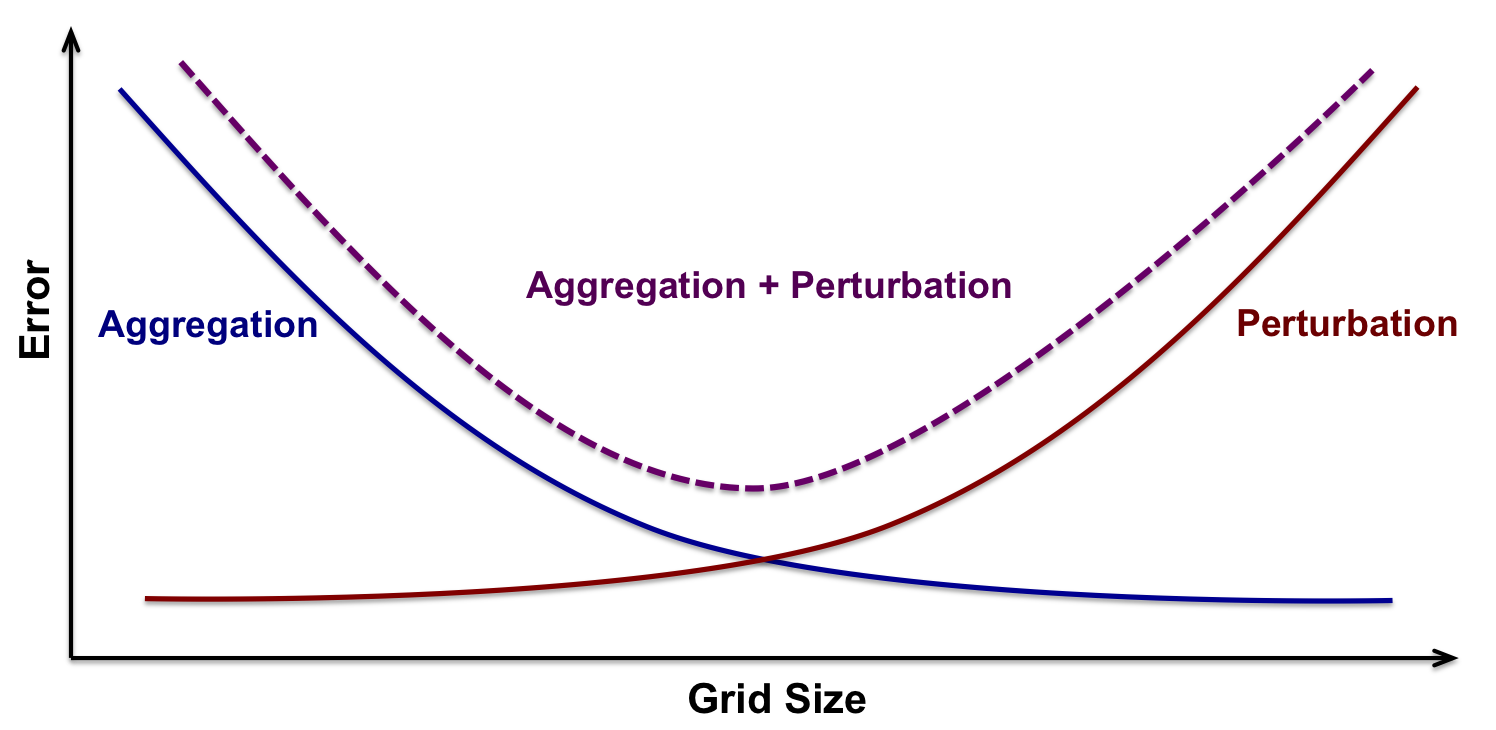}
	\caption{Expected error trade-off, demonstrating that 
	optimal grid size depends on the combination of error 
sources.}\label{fig:errors}
\end{figure}

\section{Commentary on Heuristic Approach}\label{sec:heuristic}

Qardaji~\etal~\cite{QardajiYL13} propose the \heuristic grid size
selection approach as the fixed-rule Equation~\eqref{eq:heuristic}. 
An idealisation of the kind of situation in which \heuristic fails
is presented in Figure~\ref{fig:heuristic_problem}. \heuristic
might suggest a $4 \times 4$ grid here (Figure~\ref{fig:heuristc1})
based on the number of points and assuming uniformity. However, a
\QR that happens to be located over regions of
non-uniformity---precisely where the uniformity assumption
fails--leads to erroneous query response. For concreteness, if the
four well-populated cells contain 100 points each (with just 1 each
within the \QR), then the response on the \QR would be 100: each
cell contributes $100\cdot 0.25$.
By contrast, on an alternate $8\times 8$ partitioning
(Figure~\ref{fig:heuristic2}), the response to the same \QR would
be the correct count of $4$. In this case, the uniformity assumption
and \QR align perfectly. \heuristic is derived with 
reliance on the uniformity assumption, and is incapable of
adapting to datasets where it holds to a greater/lesser degree.

\begin{figure}[b]
	\centering
	\subfloat[$4 \times 4$ grid \label{fig:heuristc1}]{
		\includegraphics[width = .4\columnwidth]{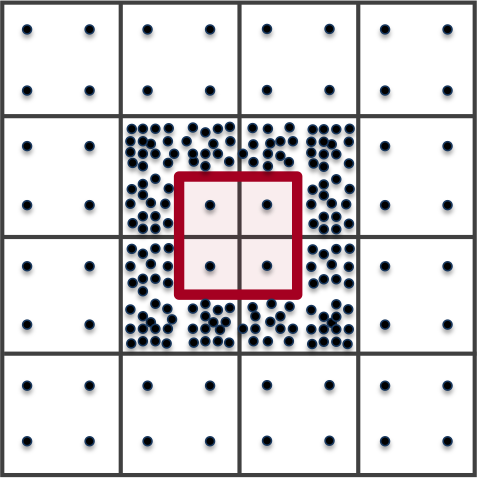}}
	\hspace{.2cm}
	\subfloat[$8 \times 8$ grid \label{fig:heuristic2}]{
		\includegraphics[width = .4\columnwidth]{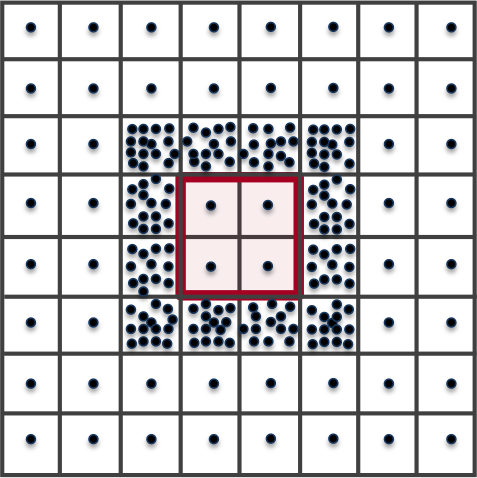}}
	\caption{The bolded red rectangle depicts the Query Region (\QR), dots illustrate points. The non-uniformly located points across the \QR boundary resulting in erroneous count using uniformity assumption for a $4 \times 4$ grid.} 
	\label{fig:heuristic_problem}
\end{figure}

While the derivation of \heuristic considers both error sources 
separately, the combination of bounds is not justified. 
Our approach privately optimises a rigorously-derived bound on
total error.

Finally, the recommendation \term{c}$=10$ is determined not 
on unrelated datasets, but openly optimised utility on the
\emph{evaluation} datasets. Not only does this practice
violate differential privacy~\cite{Hay16} but it fails to guarantee
good utility when applied to future datasets. 


Motivated by the expected need for balancing errors
through data-dependent grid tuning (\cf
Figure~\ref{fig:errors}), we explore utility vs. grid size  
under the Storage dataset for fixed \QR of 1\% of
domain size (\cf Section~\ref{sec:experiment} for dataset details).
This dataset was used in the non-private tuning of $c$ in \heuristic in
\cite{QardajiYL13}. While the results of the tuning where not compared
with the true optimum, we make this comparison in
Figure~\ref{fig:grid_error}. The trade-off between errors is as predicted
(Figure~\ref{fig:errors}). Moreover the grid chosen by \heuristic is far
from optimal, further confirming that fixed parameters are unsuitable
for accurate responses, and that private data-dependent tuning is needed.

\begin{figure}[t]
	\centering
	\includegraphics[scale = .25]{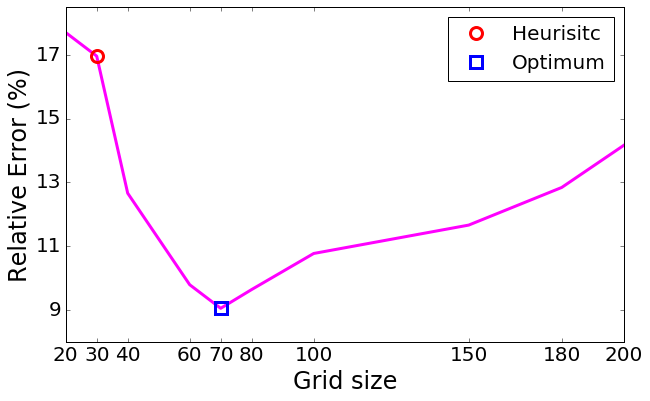}
	\caption{Effect of grid size on response utility for Storage dataset, demonstrating selection must be data dependent.}\label{fig:grid_error}
\end{figure}

\section{Approach: \EE}\label{sec:approach}
Our solution to end-to-end $\epsilon$-differentially-private
histogram release, \EE, consists of two phases with budgets
$\epsilon_1+\epsilon_2=\epsilon$: 
1) 
Select one from a set of given grid sizes, by
privately minimising data-dependent expected error bounds on
a given set of {\QR}s;
2)
Construct a histogram with chosen grid size, privatized
by perturbing cell counts. Algorithm~\ref{algo:private-grid}
(Section~\ref{sec:private-grid}) describes these phases and the
process of responding to subsequent queries using the released
data structure is described by Algorithm~\ref{algo:range}
(Section~\ref{sec:range-query}).

\subsection{Spatial Histograms Release}\label{sec:private-grid}

Consider Algorithm~\ref{algo:private-grid}, which releases a
tuned spatial histogram. In Phase~1
[lines \ref{algo:private-grid/phase1-start}--\ref{algo:private-grid/phase1-end}],
a histogram \Hb
is constructed on \Dset, for each candidate grid size in \Gset.
In Phase~2, cell counts will be privatized by adding
Laplace-distributed r.v. $Y_i$ to count $i$ as
\begin{eqnarray}
\Y{i} \sim Lap(0; \lambda)\ ,\ \ \ Var(\Y{i}) = 2\lambda ^ 2\ ,\ \ \ \Exp{\left|\Y{i}\right|} = \lambda \label{eq:noise}\enspace, 
\end{eqnarray}
taking\footnote{Since the global sensitivity for histogram release is 1.}
$\lambda=1/\epsilon_2$. The idea behind the 
algorithm is to compute a bound on the expected relative
error that this (future) noisy histogram would incur, averaged
over the {\QR}s in \QRset, as evaluated on the data \Dset.
The bound's expression (Corollary~\ref{cor:gs_all}) involves
comparing the histogram \Hb response on each \QR with the true
count on \QR, and then to the absolute of this quantity 
(reflecting aggregation error) the expected perturbation
\eqref{eq:noise}.

Note that histogram response to \QR involving the uniformity
assumption requires computation of the area overlap between
each cell $i$ in \Hb and the \QR,
\begin{eqnarray}
\a{i} = \frac{Area(QR\cap cell_{i})}{Area(cell_{i})} \in [0,1] ,\enspace i \in \Cset\enspace. \label{eq:alpha} 
\end{eqnarray}

\IncMargin{1em}

\begin{algorithm}
	\LinesNumbered
	\SetKwInOut{Input}{Input}\SetKwInOut{Output}{Output}
	
	\Input{Dataset of points \Dset; Set of grid sizes \Gset; Set of range queries \QRset; Privacy budgets $\epsilon_1,\epsilon_2$}
	\Output{Selected grid size \gstar; Histogram \Hn}
	\BlankLine
	\textbf{Phase 1:} \textit{Privately tune the grid cell size:} \label{algo:private-grid/phase1-start}
	
	\For{$g \in \Gset$}{

		Initialise histogram \Hb of counts per cell (each sized $g$).

		\For{$p \in \Dset $}{
			Increment the count of \Hb's cell that $p$ falls in.
		}
		
		\For{$\QR \in \QRset$}{
			Compute overlap \a{i} between \QR \& each cell $i$ in \Hb.

			Compute {\Hb}'s non-private \QR response $response(\QR)=\sum_i \a{i} \cdot c_i$.

			Count $true(\QR)$ number points in \Dset falling in \QR.

			Compute $error(\QR)$ bound as per Corollary~\ref{cor:gs_all}
		}
		Compute average error bound $avgError$ over \QRset.

		Set exponential mechanism score $s(D,g) = -avgError$.
		
		Compute the probability of responding $g$, using privacy parameter $\epsilon_1$, as Equation~\eqref{eq:exp}

		Compute bound/score sensitivity $\Delta_g$ from Corollary~\ref{corollary:gs_relative_avg}.
	}		
	Sample $\gstar$ as $r$ w.p. $\propto \exp(\epsilon_1 \cdot s(\Dset, r)/2\Delta_r)$. \label{algo:private-grid/phase1-end}

	\textbf{Phase 2:} \textit{Construct the private histogram counts:} \label{algo:private-grid/phase2-start}
	
	Re-create the histogram \Hb for chosen grid size \gstar.
	
	Perturb the cell counts with iid Laplace noise per cell, 
	$\Hn = \Hb + \Yb$, $\Yb \sim Lap(0; \lambda), \lambda = 1/\epsilon_2$. \label{algo:private-grid/phase2-end}

	\caption{End-to-End Differentially Private Spatial Histogram Construction via Private Grid Size Tuning}\label{algo:private-grid}
	
\end{algorithm}
\DecMargin{1em}

To minimise averaged error bound (over each query in \QRset)
we set the exponential mechanism's score function (\cf
Theorem~\ref{theorem:exp}) for maximisation to be the negative
error. To calibrate the mechanism, 
we use the sensitivity of this score function as bounded
in Corollary~\ref{corollary:gs_relative_avg}. Detailed derivation
of these bounds is provided in Section~\ref{sec:analysis}.
The result is a sampled \gstar which 
approximates the grid size optimising the (data-dependent
non-private) error bound.

In Phase~2
[lines \ref{algo:private-grid/phase2-start}--\ref{algo:private-grid/phase2-end}],
a private histogram is produced
for the chosen grid size \gstar using the Laplace
mechanism---following the same process as simulated in
Phase~1.\\[-0.5em]

\myparagraph{Computational Complexity.}
Algorithm~\ref{algo:private-grid} is efficient with time
complexity 
$O(\abs{\Dset}\cdot\abs{\QRset} \cdot\abs{\Gset} \cdot g^2)$
and space complexity
$O(\abs{\Gset} + g^2)$. The parameter
$g$ is the largest grid size
in \Gset: it is necessary to touch at least every cell.

\ifdefined\ARXIV

\subsubsection{Computing cell, \QR overlap}\label{sec:QR_ratio}

Figure~\ref{fig:grid-QR} illustrates an example \QR intersecting with a histogram cells.

\begin{figure}[h]
	\centering
	\includegraphics[scale = .35]{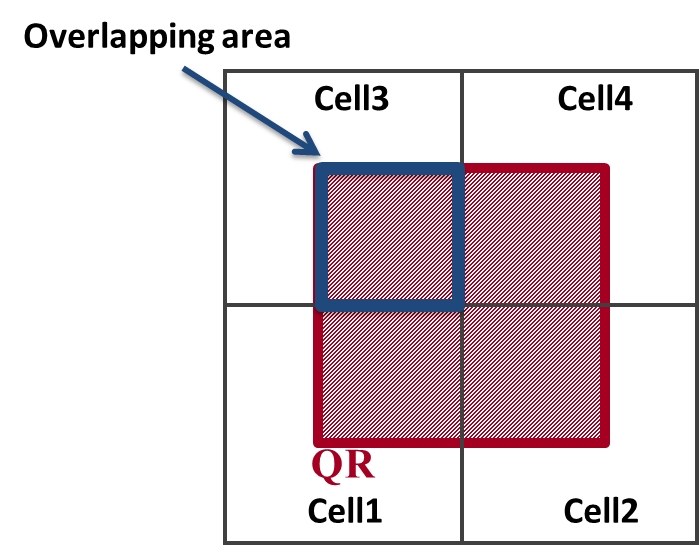}
	\caption{A query region, \QR, intersecting with grid cells and its overlapping area of a cell.}\label{fig:grid-QR}
\end{figure}

To compute the fraction \a{i} of overlapping area of a \QR
with a cell $i$ as defined in Equation~\eqref{eq:alpha}, we
have the special case of polygon intersection from
computational geometry~\cite{ORourke1998}
\begin{eqnarray*}
	Area(QR\cap cell_{i}) = x_{overlap} \times y_{overlap}
\end{eqnarray*}
where
\begin{eqnarray*}
	x_{overlap} &=& \max(0, \min(x_{12}, x_{22}) - \max(x_{11}, x_{21})) \\
	y_{overlap} &=& \max(0, \min(y_{12}, y_{22}) - \max(y_{11}, y_{21})) \\
	\mbox{cell coordinates} &=& [(x_{11}, y_{11}),(x_{12}, y_{12})]\enspace,  \\
	\mbox{\QR coordinates} &=& [(x_{21}, y_{21}),(x_{22}, y_{22})]\enspace.  \\	
\end{eqnarray*}
Note that
$\sum_{i \in \Cset} \ceil{\a{i}} \equiv \mbox{ \# overlapping cells}$.

\fi

\subsection{Post-Release Range Query Response}\label{sec:range-query}

\IncMargin{1em}

\begin{algorithm}
	\LinesNumbered
	\SetKwInOut{Input}{Input}\SetKwInOut{Output}{Output}
	
	\Input{Histogram \Hb with cells \Cset; Query \QR}
	\Output{Approximate count $response(\QR)$}
	
	\For{$i \in \Cset$}{
		Compute overlap \a{i} of $i$ with \QR, using Equation~\eqref{eq:alpha}.

		Compute cell's contribute $z_i = \a{i} \cdot c_i$.
	}
	Sum the contributions $response(\QR) = \sum_{i\in\Cset} z_i$.
	\caption{Range Query Response}\label{algo:range}
	
\end{algorithm}
\DecMargin{1em}

Algorithm~\ref{algo:range} takes histogram \Hb, query \QR, to a response. 
The algorithm simply weighs each cell's count by its overlap
\a{i} with the \QR, applying the uniformity assumption.\\[-0.5em]

\myparagraph{Computational Complexity.} The range query algorithm
is efficient in time $O(g^2)$ linear in the number of cells
\& constant space.

\section{Theoretical Analysis}\label{sec:analysis}

Having described key concepts underlying Algorithm~\ref{algo:private-grid}
in the previous section, we now derive the bound on expected error 
of Phase 2's histogram release (Corollary~\ref{cor:gs_all}),
that is privately minimised by the mechanism; we prove differential privacy
(Theorem~\ref{theorem:exp-privacy}) and provide a utility bound
(Theorem~\ref{theorem:exp-utility}). A key component of our analysis
is in bounding sensitivity of our error bound to perturbations in the
input dataset (Corollary~\ref{corollary:gs_relative_avg}). By using a more
refined response-dependent sensitivity our mechanism enjoys improved
utility at no price to privacy (\cf Section~\ref{sec:sensitivity-results}
for a discussion).

We begin our analysis for the single tuning query case
(Section~\ref{sec:single-qr}, and then extend to multiple
queries (Section~\ref{sec:multiple-qr}).

\subsection{Case: Single Tuning Query}\label{sec:single-qr}
We first bound expected error of Phase 2 when responding to a single
(tuning) \QR. We bound both absolute error, and relative
error. We introduce a constant $\rho$ in the denominator of the latter
in order to control sensitivity in Theorem~\ref{theorem:gs_relative},
as discussed in Remark~\ref{rem:sanity}.

\begin{theorem}\label{theorem:qr_error}
For any given query region \QR, the histogram \Hn released by
Algorithm~\ref{algo:private-grid} Phase~2 on data \Dset achieves
expected error (wrt randomness in the \Y{i}) bounded as, 
\begin{enumerate}[(i)]
	\item Absolute error:
\begin{eqnarray*}
	\Exp{\left|response\left(\QR\right) - true\left(\QR\right)\right|} &\le& \left|\sum_{i \in \Cset}\a{i} \c{i} - \sum_{i \in \Cset} \d{i}\right| + \lambda \norm{\Alpha}_{1}.
\end{eqnarray*}
\item Relative error:
\begin{eqnarray*}
	 \Exp{\frac{\left|response(\QR) - true(\QR)\right|}{\max\{true(\QR), \rho\}}} 
	\le  \frac{\left|\sum_{i \in \Cset}\a{i} \c{i} - \sum_{i \in \Cset} \d{i}\right| + \lambda \norm{\Alpha}_{1}}{\max\{\sum_{i \in \Cset} \d{i}, \rho\}} .
\end{eqnarray*}
\end{enumerate}
where $\rho>1$ is a constant (\cf Remark~\ref{rem:sanity}), and \d{i}
counts the number of points in \Dset falling in both cell $i$ and \QR.
 \end{theorem}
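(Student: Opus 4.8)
The plan is to split the signed error into a deterministic \term{aggregation} term and a random \term{perturbation} term, bound the two independently, and recombine via the triangle inequality. Since Phase~2 releases \Hn and Algorithm~\ref{algo:range} weighs the noisy counts by the overlaps \a{i}, I would first write $response(\QR) = \sum_{i\in\Cset}\a{i}(\c{i}+\Y{i})$, whereas $true(\QR)=\sum_{i\in\Cset}\d{i}$ is noise-free. Subtracting yields the decomposition
\begin{eqnarray*}
response(\QR)-true(\QR) = \Bigl(\sum_{i\in\Cset}\a{i}\c{i}-\sum_{i\in\Cset}\d{i}\Bigr)+\sum_{i\in\Cset}\a{i}\Y{i}\enspace,
\end{eqnarray*}
in which only the final sum carries the randomness of the \Y{i}.

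First I would take absolute values and apply the triangle inequality, bounding the absolute error by $\left|\sum_{i\in\Cset}\a{i}\c{i}-\sum_{i\in\Cset}\d{i}\right|+\left|\sum_{i\in\Cset}\a{i}\Y{i}\right|$. Taking expectations over the noise, the aggregation term is constant and survives unchanged. For the perturbation term, because each $\a{i}\in[0,1]$ is non-negative by Equation~\eqref{eq:alpha}, a second triangle inequality gives $\left|\sum_{i\in\Cset}\a{i}\Y{i}\right|\le\sum_{i\in\Cset}\a{i}\abs{\Y{i}}$; linearity of expectation together with $\Exp{\abs{\Y{i}}}=\lambda$ from Equation~\eqref{eq:noise} then bounds its expectation by $\lambda\sum_{i\in\Cset}\a{i}=\lambda\norm{\Alpha}_1$. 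Summing the two bounds establishes part~(i).

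For part~(ii), the essential point is that the denominator $\max\{true(\QR),\rho\}$ is a function of the data alone and is free of the \Y{i}, so it is a strictly positive constant that factors out of the expectation---no inequality for the expectation of a ratio is required. Writing $true(\QR)=\sum_{i\in\Cset}\d{i}$ and dividing the part~(i) bound through by $\max\{\sum_{i\in\Cset}\d{i},\rho\}$ reproduces the claimed relative-error bound exactly.

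There is no genuinely hard step here; the argument is a two-term split plus linearity. The two subtleties I would be careful about are: (a) using non-negativity of the \a{i} so the inner triangle inequality collapses cleanly to the $\norm{\Alpha}_1$ form rather than a looser expression; and (b) observing that randomness enters \emph{only} through the numerator in part~(ii), which is what lets the denominator leave the expectation. The sanity constant $\rho>1$ has no effect on this bound beyond keeping the denominator bounded away from zero; its real role is to tame sensitivity downstream (\cf Remark~\ref{rem:sanity}), so I would not dwell on it here.
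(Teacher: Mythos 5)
Your proposal is correct and follows essentially the same route as the paper's proof: the same split of the signed error into the deterministic aggregation term plus the noise term, the triangle inequality with monotonicity and linearity of expectation, the bound $\Exp{\left|\sum_{i\in\Cset}\a{i}\Y{i}\right|}\le\sum_{i\in\Cset}\a{i}\Exp{\abs{\Y{i}}}=\lambda\norm{\Alpha}_1$, and the observation that the relative-error claim follows immediately since the denominator is noise-free. Your explicit remarks on the non-negativity of the \a{i} and on the denominator being a data-only constant are slightly more careful than the paper's terse justification, but the argument is identical.
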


 \begin{proof}
 	Consider the first case of absolute error,
 	\begin{eqnarray*}
		&& \Exp{\left|\sum_{i \in \Cset} \a{i}(\c{i} + \Y{i}) - \sum_{i \in \Cset} \d{i}\right|}\\
	 &\le& \Exp{\left|\sum_{i \in \Cset}\a{i} \c{i} - \sum_{i \in \Cset} \d{i}\right|} + \Exp{\left|\sum_{i\in\Cset} \a{i} \Y{i}\right|}\\
 		&\le& \left|\sum_{i \in \Cset}\a{i} \c{i} - \sum_{i \in \Cset} \d{i}\right| + \lambda \norm{\Alpha}_{1}\enspace,\\
 	\end{eqnarray*}
 	where the first inequality follows from rearranging terms
	and applying the triangle inequality and monotonicity \& linearity
	of expectation; the second inequality follows from the same arguments
	combined with Equation~\eqref{eq:noise}:
 	\begin{eqnarray*}
 	\Exp{\left|\Alpha\cdot \Yb\right|}
	\ \ \le\ \  \sum_{i \in \Cset} \a{i}\Exp{\left|\Y{i}\right|} 
	 \ \ =\ \ \lambda \norm{\Alpha}_{1}\enspace.\\
	\end{eqnarray*}
 The second claim follows immediately.
 \end{proof}

 \begin{rem}
	 It is notable that the bound decomposes total (expected) error
	 into two interpretable terms:
	 $\left|\sum_{i \in \Cset}\a{i} \c{i} - \sum_{i \in \Cset} \d{i}\right|$
	 reflecting aggregation error due to spatial aggregation and
	 (potential) failure of the uniformity assumption; and
	 $\lambda \|\Alpha\|_1$
	 reflecting error due to random perturbation from the Laplace
	 mechanism, where $\lambda$ is noise scale and $\|\Alpha\|_1$
	 counts the (effective) cells overlapping the \QR.
 \end{rem}

\begin{rem}\label{rem:sanity}
	$ \rho>1 $ is a user-defined constant, referred to as the
	\term{sanity bound} in the literature~\cite{GarofalakisG02,VitterW99}.
	It is commonly used to control sensitivity of relative error measures
	in the face of small true counts that can potentially yield unbounded
	blow-up of relative error. Previous recommendations set it as 
	$\rho = \delta\times|\Dset|$, where
	$0<\delta<1$ is taken to be a small constant reflecting a
	pseudo-count fraction of \Dset.
\end{rem}

As Algorithm~\ref{algo:private-grid} Phase~1 privately minimises the
relative the error bound on Phase~2 of Theorem~\ref{theorem:qr_error}---using
the exponential mechanism---we must compute the sensitivity of this
bound which itself is data-dependent and hence privacy-sensitive. 
\emph{We cannot simply optimise the error bound of
Theorem~\ref{theorem:qr_error} directly, as implicitly done
by \heuristic, lest we breach data privacy.}

We define the exponential mechanism's
score (quality) function as the negative relative error bound: maximising
this score over candidate grid sizes \Gset, equivalently 
minimised the error bound, which in turn is a close surrogate for 
minimising actual future error of Phase~2 on the tuning query set
\QRset.
\begin{eqnarray}
s(\Dset,r) & = & - \frac{\left|\sum_{i \in \Cset}\a{i} \c{i} - \sum_{i \in \Cset} \d{i}\right| + \lambda \norm{\Alpha}_{1}}{\max\{\sum_{i \in \Cset} \d{i}, \rho\}}\enspace. \label{eq:score}
\end{eqnarray}

\ifdefined\ARXIV
And we make the analogous definition if optimising absolute error:
\begin{eqnarray*}
s(\Dset,r) & = & - \left|\sum_{i \in \Cset}\a{i} \c{i} - \sum_{i \in \Cset} \d{i}\right| - \lambda  \norm{\Alpha}_{1}\enspace.
\end{eqnarray*}
\fi

To calibrate the exponential mechanism for differential
privacy, we must bound the sensitivity $\Delta s$ of the score function.

\ifdefined\ARXIV
\begin{lemma}\label{lem:gs}
	The global sensitivity of the absolute score function, is bounded above by $\left|1- \a{i}\right|$ which is at most $1$, as each $\a{i} \in (0,1]$.
\end{lemma}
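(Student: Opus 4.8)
The plan is to bound the change in the absolute score function under the addition of a single record, exploiting the fact that only the aggregation term is data-dependent. First I would observe that the perturbation term $\lambda\norm{\Alpha}_{1}$ depends only on the grid geometry and the noise scale, not on \Dset, so it cancels in any difference $s(\Dset,r)-s(\DNset,r)$; negation likewise preserves sensitivity. Hence it suffices to control the sensitivity of $f(\Dset) = \sum_{i\in\Cset}\a{i}\c{i} - \sum_{i\in\Cset}\d{i}$ and then pass to its absolute value.

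Next I would apply the reverse triangle inequality $\bigl|\,|f(\Dset)| - |f(\DNset)|\,\bigr| \le |f(\Dset) - f(\DNset)|$, reducing the problem to bounding $|f(\DNset)-f(\Dset)|$ over neighbours \Dset, \DNset. Since \DNset carries exactly one extra point, that point falls in a unique cell $j$, incrementing \c{j} by one and thus raising the aggregate $\sum_{i\in\Cset}\a{i}\c{i}$ by \a{j}; simultaneously it raises the true count $\sum_{i\in\Cset}\d{i}$ by one precisely when the point lies inside the query region \QR (equivalently, contributes to \d{j}).

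The core of the argument is then a two-case split on whether the added point lies in \QR. If it does, $f$ changes by $\a{j}-1$, giving $|f(\DNset)-f(\Dset)| = |1-\a{j}|$, the form quoted in the lemma; if it does not (but the point lands in a cell straddling the \QR boundary), $f$ changes by \a{j}. Because each $\a{i}\in(0,1]$, both $|1-\a{j}|$ and \a{j} are at most $1$, and the complementary case of a cell disjoint from \QR contributes nothing, so the global sensitivity is at most $1$.

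I expect the main obstacle to be bookkeeping rather than genuine difficulty: one must track that the single added point acts simultaneously on the estimated response (through \c{j}, weighted by \a{j}) and on the true count (through \d{j}), and verify that the two extremal configurations---a point inside \QR versus a point in a straddling cell outside \QR---each yield a bound of $1$. The reverse triangle inequality is what cleanly disposes of the outer absolute value, and the data-independence of $\lambda\norm{\Alpha}_{1}$ is what prevents the perturbation term from inflating the sensitivity.
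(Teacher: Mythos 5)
Your proof is correct and follows the same basic route as the paper's: the perturbation term $\lambda\norm{\Alpha}_{1}$ is data-independent and cancels in the difference, the reverse triangle inequality disposes of the outer absolute value, and what remains is the effect of a single extra point on $\sum_{i\in\Cset}\a{i}\c{i} - \sum_{i\in\Cset}\d{i}$. The one substantive difference is that the paper's proof analyses only the configuration in which the added point lands inside the \QR, so that some $\c{j}$ and the true count both increase by one and the aggregation term changes by $\a{j}-1$; that single case is what produces the quoted intermediate bound $\left|1-\a{i}\right|$. You additionally treat the point landing in a cell that overlaps \QR but lies outside \QR itself (change $\a{j}$) and the disjoint case (change $0$). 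Those extra cases are genuinely needed for a global-sensitivity statement: when the point is outside \QR the change is $\a{j}$, which can exceed $\left|1-\a{j}\right|$, so the uniform bound is really $\max\{\a{j},\left|1-\a{j}\right|\}\le 1$ rather than $\left|1-\a{i}\right|$ alone. Your three-way case structure is exactly the one the paper itself adopts for the harder relative-error sensitivity (Theorem~\ref{theorem:gs_relative}, Figure~\ref{fig:cases}), so your version is, if anything, the more complete argument; the final conclusion that the sensitivity is at most $1$ is unaffected.
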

\begin{proof}
	From the reverse triangle inequality we have
	\begin{eqnarray*}
		&& GS(s) = \Delta= \max_{r, \norm{\Dset - \DNset}_1 \le 1} \abs{s(\Dset,r) - s (\DNset, r)}\\
		&\le& \left|\left(\left|\sum_{i \in \Cset}\a{i} \c{i} - \sum_{i \in \Cset} \d{i}\right| + \lambda  \norm{\Alpha}_{1}\right) - \left(\left|\sum_{i \in \Cset}\a{i} c'_{i} - \sum_{i \in \Cset} d'_{i}\right| + \lambda  \norm{\Alpha}_{1}\right)\right|\\
		&\le& \abs{\abs{\a{1}\c{1} + \a{2}\c{2} + \dots + \a{i}\c{i} + \dots - (\d{1} + \d{2} + \dots + \d{i} + \dots)} - \\
			&& \abs{\a{1}\c{1} + \a{2}\c{2} + \dots + \a{i}(\c{i} + 1) + \dots \\
				&& - (\d{1} + \d{2} + \dots + \d{i}+1 + \dots)}}\\	
		&\le& \abs{\abs{\a{1}\c{1} + \a{2}\c{2} + \dots + \a{i}\c{i} + \dots - (\d{1} + \d{2} + \dots + \d{i} + \dots)} \\
			&& - \abs{\big(\a{1}\c{1} + \a{2}\c{2} + \dots + \a{i}\c{i} + \dots - (\d{1} + \d{2} + \dots + \d{i} + \dots)\big) \\
				&& + (\a{i} - 1)}}\\
		&=& \left|1- \a{i}\right|\enspace. 
	\end{eqnarray*}
\end{proof}

The case for relative error is much more involved.
\fi

\begin{theorem}\label{theorem:gs_relative}
The response-dependent sensitivity of relative error score
function~\eqref{eq:score}, for any $r\in \Gset$ and fixed
query $t\in\QRset$, is bounded
\begin{eqnarray*}
	\GS{r} s &\leq& 
	\frac{1}{\delta(\delta\left|\Dset\right|+1)} + \frac{\lambda  \norm{\Alpha^{r}}_{1}}{\delta\left|\Dset\right|(\delta\left|\Dset\right|+1)} + \frac{1}{\delta\left|\Dset\right| + \delta}\enspace,
\end{eqnarray*}
where $\delta \in (0,1)$ defines sanity bound constant
$\rho = \delta\left|\Dset\right|$. We introduce superscript $r$ to \Alpha,
to highlight explicit dependence on $r\in\Gset$.
\end{theorem}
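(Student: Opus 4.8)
The plan is to bound the response-dependent sensitivity $\GS{r}s = \max_{\Dset,\DNset}\abs{s(\Dset,r)-s(\DNset,r)}$ directly over neighbouring pairs, holding the grid size $r$ fixed (so the overlaps $\a{i}$ and the quantity $L := \lambda\norm{\Alpha^{r}}_{1}$ are constant throughout). I would abbreviate the numerator of~\eqref{eq:score} as $u+L$ with $u=\abs{A-B}$, where $A=\sum_i\a{i}\c{i}$ is the non-private histogram estimate on $t$ and $B=\sum_i\d{i}$ the true count; the denominator is $Q=\max\{B,\rho\}$ with $\rho=\delta\abs{\Dset}$. Take $\DNset=\Dset\cup\{p\}$ with the extra point $p$ falling in cell $j$. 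First I would record exactly what moves: $\c{j}$ rises by one, so $A\mapsto A+\a{j}$; $B\mapsto B+b$ with $b\in\{0,1\}$ equal to one iff $p\in t$; and, crucially, $\rho\mapsto\rho'=\delta(\abs{\Dset}+1)=\rho+\delta$, so $Q\mapsto Q'=\max\{B+b,\rho'\}$. Since $L$ is data-independent it is unchanged.

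Next I would telescope the difference of the two ratios into three pieces and then bound each:
\[
s(\DNset,r) - s(\Dset,r) = \underbrace{\frac{u - u'}{Q'}}_{\mathrm{(I)}} + \underbrace{u\Big(\frac{1}{Q} - \frac{1}{Q'}\Big)}_{\mathrm{(II)}} + \underbrace{L\Big(\frac{1}{Q} - \frac{1}{Q'}\Big)}_{\mathrm{(III)}},
\]
where $u'=\abs{A'-B'}$; the final bound follows by the triangle inequality on $\abs{\mathrm{(I)}}+\abs{\mathrm{(II)}}+\abs{\mathrm{(III)}}$. Note $Q'\ge Q$ (both $B$ and $\rho$ only grow), so the bracketed factors in (II),(III) are nonnegative.

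For (I), the reverse triangle inequality gives $\abs{u'-u}=\big|\,\abs{(A-B)+(\a{j}-b)}-\abs{A-B}\,\big|\le\abs{\a{j}-b}\le 1$ since $\a{j}\in[0,1]$ and $b\in\{0,1\}$; dividing by $Q'\ge\rho'=\delta(\abs{\Dset}+1)$ yields the term $\tfrac{1}{\delta\abs{\Dset}+\delta}$. For (II) and (III) I would first establish $0\le Q'-Q\le 1$ via subadditivity of the maximum, $\max\{B+b,\rho+\delta\}\le\max\{B,\rho\}+\max\{b,\delta\}$ with $\max\{b,\delta\}\le 1$. Then $\tfrac1Q-\tfrac1{Q'}=\tfrac{Q'-Q}{QQ'}$ is increasing in $Q'-Q$ and decreasing in $Q$, so it is largest at $Q'-Q=1$, $Q=\rho=\delta\abs{\Dset}$, giving $\tfrac{Q'-Q}{QQ'}\le\tfrac{1}{\delta\abs{\Dset}(\delta\abs{\Dset}+1)}$; multiplying by $L$ produces term~(2). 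For (II) I additionally use $u=\abs{A-B}\le\max\{A,B\}\le\abs{\Dset}$, since $A=\sum_i\a{i}\c{i}\le\sum_i\c{i}=\abs{\Dset}$ and $B\le\abs{\Dset}$; hence $u\cdot\tfrac{1}{\delta\abs{\Dset}(\delta\abs{\Dset}+1)}\le\tfrac{1}{\delta(\delta\abs{\Dset}+1)}$, which is term~(1). Summing the three gives the stated bound.

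The hard part is the moving denominator: in contrast to a fixed sanity bound, $\rho=\delta\abs{\Dset}$ itself depends on dataset size, so passing to a neighbour perturbs numerator and denominator of the relative error simultaneously, and the $\max$ defining $Q$ must be handled without knowing whether $B$ or $\rho$ is active. The telescoping above cleanly separates the numerator shift (I) from the denominator shifts (II),(III), and the two facts that make everything close are precisely $0\le Q'-Q\le 1$ (max subadditivity) and $u\le\abs{\Dset}$ (the aggregation gap can be no larger than the dataset) — the latter being exactly what collapses the denominator-change term (II) into the form $\tfrac{1}{\delta(\delta\abs{\Dset}+1)}$.
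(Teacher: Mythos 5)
Your proof is correct and arrives at exactly the stated bound, but it is organised quite differently from the paper's. The paper proves this result by an exhaustive enumeration: three main cases according to where the extra point of \DNset lands (outside the \QR and the overlapping cells; outside the \QR but inside an overlapping cell; inside the \QR), each further split into sub-cases according to which arm of $\max\{\sum_i \d{i},\rho\}$ and $\max\{\sum_i d'_i,\rho'\}$ is active, and then observes that the Case-3 bound dominates the others. Your single telescoping decomposition $\mathrm{(I)}+\mathrm{(II)}+\mathrm{(III)}$ is essentially the same algebraic split the paper performs \emph{inside} each case, but you make it uniform by two observations the paper never isolates: the generic numerator shift $|u'-u|\le|\a{j}-b|\le 1$ with $b\in\{0,1\}$ (which subsumes all three point-location cases at once), and the subadditivity bound $0\le Q'-Q\le 1$ for the shifted maximum (which subsumes all the denominator sub-cases at once, including the moving sanity bound $\rho\mapsto\rho+\delta$). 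Together with $u\le|\Dset|$ this reproduces each of the three terms of the theorem, and as a bonus it removes the need to argue that Case 3 dominates Cases 1 and 2 (which in the paper silently relies on $\delta<1$). Your version is shorter, less error-prone, and would generalise more readily (e.g.\ to symmetric neighbouring relations); the paper's case analysis buys only a slightly more concrete picture of which configuration of the extra point is worst.
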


	To prove the result we must bound the quantity 
	\begin{eqnarray*}
		&&\left|\frac{1}{\max\{\sum_{i \in \Cset} \d{i}, \rho\}} \left(\left|\sum_{i \in \Cset}\a{i} \c{i} - \sum_{i \in \Cset} \d{i}\right| + \lambda  \norm{\Alpha}_{1}\right) - \right. \\ && \left. \frac{1}{\max\{\sum_{i \in \Cset} d'_{i}, \rho^{\prime}\}} \left(\left|\sum_{i \in \Cset}\a{i} c'_{i} - \sum_{i \in \Cset} d'_{i}\right| + \lambda  \norm{\Alpha}_{1}\right)\right|\enspace,
	\end{eqnarray*}
	where $\rho = \delta \times\left|\Dset\right|$ and
	$\rho^{\prime} = \delta \times\left|\DNset\right| = \delta \times(\left|\Dset\right| + 1) = \rho + \delta$.
The proof proceeds by cases, based on where {\DNset}'s extra
point falls: 
outside \QR and cells overlapping \QR (Figure~\ref{fig:case1}); 
outside \QR, inside cells overlapping \QR (Figure~\ref{fig:case2}); or
inside \QR (Figure~\ref{fig:case3}).
The reader interested in the (technical) calculations for the full proof
of the theorem are referred to 
\ifdefined\ARXIV
Appendix~\ref{sec:appendix}. 
\else
our full report~\cite{techreport}. 
\fi


	\begin{figure}[b!]
		\centering
		\subfloat[Case 1.\label{fig:case1}]{
			\includegraphics[width = .3\columnwidth]{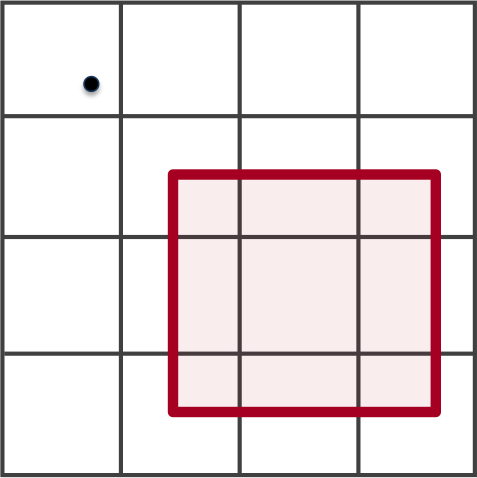}}
			\hspace{.15cm}
			\subfloat[Case 2.\label{fig:case2}]{
				\includegraphics[width = .3\columnwidth]{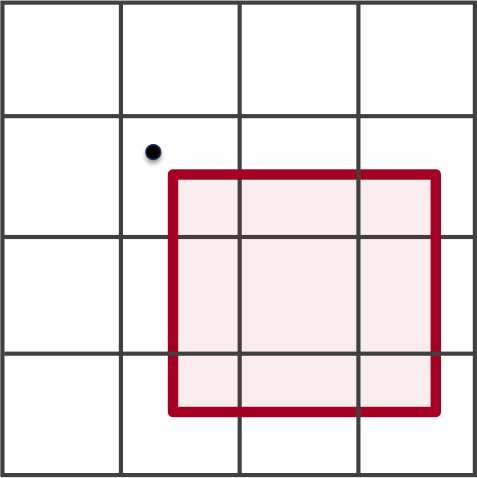}}
			\hspace{.15cm}
			\subfloat[Case 3.\label{fig:case3}]{
				\includegraphics[width = .3\columnwidth]{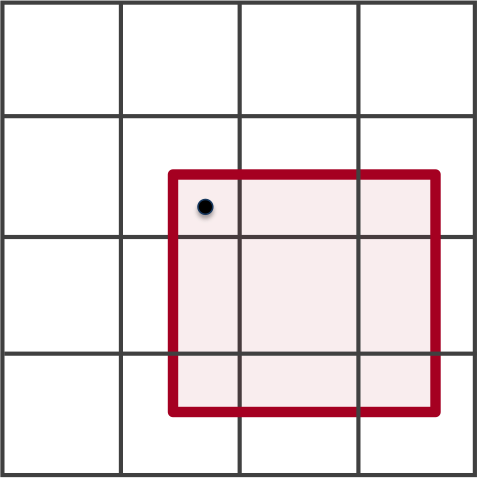}}
				\caption{Cases in the proof of Theorem~\ref{theorem:gs_relative} based on where the extra point (dot) falls relative to the \QR (red) and cells overlapping the \QR (the bottom right $3\times 3$ cells).} 
		\label{fig:cases}
	\end{figure}


\subsection{Case: Multiple Tuning Queries}\label{sec:multiple-qr}
Before proceeding to privacy and utility guarantees, we lift
the above single query analysis, to the case of multiple tuning
queries. The first step is bounding average
Phase~2 error over query set \QRset. This follows from 
Theorem~\ref{theorem:qr_error} and linearity of expectation.

\begin{corollary}\label{cor:gs_all}
For given set of query region \QRset, the histogram \Hn released by
Algorithm~\ref{algo:private-grid} Phase~2 on data \Dset achieves
average expected error (wrt randomness in the \Y{i}) bounded as, 
\begin{enumerate}[(i)]
	\item Absolute error:
\begin{eqnarray*}
	&& \Exp{\frac{1}{|\QRset|}\sum_{t\in\QRset}\left|response\left(t\right) - true\left(t\right)\right|} \\
	&\le& \frac{1}{|\QRset|}\sum_{t\in\QRset}\left(\left|\sum_{i \in \Cset}\a{i}^t \c{i} - \sum_{i \in \Cset} \d{i}\right| + \lambda \norm{\Alpha^t}_{1}\right)\enspace.
\end{eqnarray*}
\item Relative error:
\begin{eqnarray*}
&& \Exp{\frac{1}{|\QRset|}\sum_{t\in\QRset}\frac{\left|response(t) - true(t)\right|}{\max\{true(t), \rho\}}} \\
&\le&  \frac{1}{|\QRset|}\sum_{t\in\QRset}\frac{\left|\sum_{i \in \Cset}\a{i}^t \c{i} - \sum_{i \in \Cset} \d{i}\right| + \lambda \norm{\Alpha^t}_{1}}{\max\left\{\sum_{i \in \Cset} \d{i}, \rho\right\}} \enspace.
\end{eqnarray*}
\end{enumerate}
where $\rho>1$ is a constant (\cf Remark~\ref{rem:sanity}), \d{i}
counts the number of points in \Dset falling in both cell $i$ and \QR $t$, and
$\Alpha^t$ denotes the vector of cell overlaps with $t$.
\end{corollary}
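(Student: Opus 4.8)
The plan is to reduce the multiple-query bound to the single-query result of Theorem~\ref{theorem:qr_error} applied term-by-term, with the only glue being linearity of expectation. First I would observe that although Phase~2 releases a single noisy histogram---so that the same Laplace noise vector \Yb is reused across every tuning query $t\in\QRset$---this shared dependence is immaterial here, since expectation is linear irrespective of the joint distribution of the summands. Hence for the absolute-error claim (i) I would pull the finite average through the expectation,
\begin{eqnarray*}
\Exp{\frac{1}{|\QRset|}\sum_{t\in\QRset}\left|response(t) - true(t)\right|}
&=& \frac{1}{|\QRset|}\sum_{t\in\QRset}\Exp{\left|response(t) - true(t)\right|}\enspace.
\end{eqnarray*}

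Second, I would bound each summand by invoking Theorem~\ref{theorem:qr_error}(i) on the individual query $t$, writing $\Alpha^t$ for the overlap vector of the cells with $t$ and $\a{i}^t$ for its entries. This immediately yields $\Exp{\left|response(t) - true(t)\right|} \le \left|\sum_{i \in \Cset}\a{i}^t \c{i} - \sum_{i \in \Cset} \d{i}\right| + \lambda \norm{\Alpha^t}_{1}$ for every $t$, and substituting into the averaged identity above produces claim (i). The relative-error claim (ii) follows by the identical two-step argument, except that each summand is the expectation of $\left|response(t) - true(t)\right|/\max\{true(t),\rho\}$, which I would bound using Theorem~\ref{theorem:qr_error}(ii); note the sanity-bounded denominator $\max\{\sum_{i\in\Cset}\d{i},\rho\}$ is a deterministic (noise-free) quantity, so it passes outside the expectation cleanly and no additional care is needed.

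I do not anticipate a genuine obstacle: the corollary is a lifting of an already-proved single-query bound, and the entire content is the interchange of expectation and the finite sum. The one point worth stating explicitly---and the closest thing to a subtlety---is that correctness does not require the per-query errors to be independent; the shared-noise coupling across \QRset affects higher moments but leaves the expected average untouched by linearity. The only remaining bookkeeping is to carry the query index $t$ through the overlap fractions, which is why the statement decorates \a{} and \Alpha with the superscript $t$.
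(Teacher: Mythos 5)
Your proposal is correct and matches the paper's own argument, which likewise derives the corollary from Theorem~\ref{theorem:qr_error} applied per query together with linearity of expectation. Your explicit remark that the shared Laplace noise across queries in \QRset is harmless because linearity of expectation does not require independence is a worthwhile clarification, but it does not change the route.
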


For the general case, we therefore define the exponential mechanism's
score function as before, as the negative of the
bound on the expectation of the error averaged over \QRset,
\begin{eqnarray}
	s(\Dset,r) & = & - \frac{1}{|\QRset|}\sum_{t\in\QRset}\frac{\left|\sum_{i \in \Cset}\a{i}^t \c{i} - \sum_{i \in \Cset} \d{i}\right| + \lambda \norm{\Alpha^t}_{1}}{\max\{\sum_{i \in \Cset} \d{i}, \rho\}}\enspace. \label{eq:score-avg}
\end{eqnarray}

\ifdefined\ARXIV
And again we make the analogous definition if optimising absolute error:
\begin{eqnarray*}
	s(\Dset,r) & = & - \frac{1}{|\QRset|}\sum_{t\in\QRset}\left(\left|\sum_{i \in \Cset}\a{i}^t \c{i} - \sum_{i \in \Cset} \d{i}\right| + \lambda  \norm{\Alpha^t}_{1}\right)\enspace.
\end{eqnarray*}
\fi

We next extend the calculation of response-dependent sensitivity
of this bound to perturbations of the database \Dset.

\begin{lemma}\label{lem:gs_all_qr}
	For $i\in\mathcal{I}$ a finite index set, functions $f_i: \mathcal{X}\to\reals$ on arbitrary domain, and constants $\Delta_i\in\reals$,
	\begin{eqnarray*}
	\left(\forall i\in\mathcal{I}, \sup_{x\in\mathcal{X}} |f_i(x)| \leq \Delta_i\right) 
	&\Rightarrow& 
	\sup_{x\in\mathcal{X}}\left|\frac{1}{|\mathcal{I}|}\sum_{i\in\mathcal{I}} f_i(x)\right| \leq \frac{1}{|\mathcal{I}|}\sum_{i\in\mathcal{I}}\Delta_i
	\end{eqnarray*}
\end{lemma}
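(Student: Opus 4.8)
The plan is to establish the implication pointwise in $x$ first, and only afterwards take the supremum. Since the target bound $\frac{1}{|\mathcal{I}|}\sum_{i\in\mathcal{I}}\Delta_i$ is a constant independent of $x$, it suffices to show that the averaged function is dominated by this constant for every fixed $x\in\mathcal{X}$; the supremum over $x$ then inherits the same bound automatically.

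First I would fix an arbitrary $x\in\mathcal{X}$ and apply the triangle inequality to move the absolute value inside the (finite) sum, giving $\left|\frac{1}{|\mathcal{I}|}\sum_{i\in\mathcal{I}} f_i(x)\right| \leq \frac{1}{|\mathcal{I}|}\sum_{i\in\mathcal{I}} |f_i(x)|$, where the nonnegativity of $1/|\mathcal{I}|$ lets that factor pass through the absolute value unchanged. Next, for each index $i$ I would bound the summand by its own supremum, $|f_i(x)| \leq \sup_{x'\in\mathcal{X}} |f_i(x')| \leq \Delta_i$, invoking the hypothesis. Summing these per-term bounds and dividing by $|\mathcal{I}|$ yields $\frac{1}{|\mathcal{I}|}\sum_{i\in\mathcal{I}} |f_i(x)| \leq \frac{1}{|\mathcal{I}|}\sum_{i\in\mathcal{I}} \Delta_i$, so chaining the two inequalities delivers the desired bound at the chosen $x$.

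Finally, because this bound holds uniformly over all $x\in\mathcal{X}$ and its right-hand side carries no dependence on $x$, taking the supremum over $x$ on the left preserves the inequality, which completes the argument. I expect no genuine obstacle here, as the result is elementary; the only points meriting a moment's care are that $\mathcal{I}$ is finite, so that the triangle inequality and term-wise summation are unproblematic, and that the per-index supremum appearing in the hypothesis dominates the pointwise value $|f_i(x)|$ for the particular $x$ under consideration.
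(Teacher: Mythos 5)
Your argument is correct and is essentially the same as the paper's: both apply the triangle inequality to the averaged sum and then bound each term by its supremum (and hence by $\Delta_i$); you simply phrase it pointwise in $x$ before taking the supremum, whereas the paper distributes the supremum across the sum directly. No substantive difference.
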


\begin{proof}
Applying the triangle inequality and distributing the supremum yields the result,
\begin{eqnarray*}
	\sup_{x\in\mathcal{X}}\left|\frac{1}{|\mathcal{I}|}\sum_{i\in\mathcal{I}} f_i(x)\right|
	&\leq& \sup_{x\in\mathcal{X}}\frac{1}{|\mathcal{I}|}\sum_{i\in\mathcal{I}} | f_i(x)| \\
	&\leq& \frac{1}{|\mathcal{I}|}\sum_{i\in\mathcal{I}} \sup_{x\in\mathcal{X}} | f_i(x)| \\
	&\leq& \frac{1}{|\mathcal{I}|}\sum_{i\in\mathcal{I}}\Delta_i\enspace.
\end{eqnarray*}
\end{proof}

\ifdefined\ARXIV
\begin{corollary}
The response-dependent sensitivity of averaged absolute error
score function over query set \QRset, for any
$r\in \Gset$, is bounded by $1$.
\end{corollary}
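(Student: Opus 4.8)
The plan is to recognise the averaged absolute score as a plain mean of per-query absolute scores, and then to combine the single-query sensitivity bound of Lemma~\ref{lem:gs} with the generic averaging inequality of Lemma~\ref{lem:gs_all_qr}. Throughout I would fix the response $r\in\Gset$, since response-dependent sensitivity holds $r$ constant, and work over the domain of neighbouring database pairs $\Dset,\DNset$.

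First I would decompose the score. Writing
\[ s(\Dset,r) = \frac{1}{|\QRset|}\sum_{t\in\QRset} s_t(\Dset,r), \qquad s_t(\Dset,r) = -\left(\left|\sum_{i\in\Cset}\a{i}^t\c{i} - \sum_{i\in\Cset}\d{i}\right| + \lambda\norm{\Alpha^t}_1\right), \]
exhibits each summand $s_t$ as exactly the single-query absolute score whose sensitivity is already controlled by Lemma~\ref{lem:gs}; only the aggregation indices now carry a superscript $t$ recording the fixed query. I would then invoke that lemma per query: for each $t\in\QRset$ and neighbouring $\Dset,\DNset$, the perturbation term $\lambda\norm{\Alpha^t}_1$ is identical on both databases and cancels, so $\abs{s_t(\Dset,r)-s_t(\DNset,r)} \le \abs{1-\a{i}^t} \le 1$, where $i$ indexes the cell into which {\DNset}'s extra point falls. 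Thus the per-query response-dependent sensitivity is bounded by $1$.

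Finally I would lift this to the average. Regarding $f_t:(\Dset,\DNset)\mapsto s_t(\Dset,r)-s_t(\DNset,r)$ as a function on the space of neighbouring database pairs, linearity gives $s(\Dset,r)-s(\DNset,r) = \frac{1}{|\QRset|}\sum_{t\in\QRset} f_t$, whence Lemma~\ref{lem:gs_all_qr} applied with index set $\QRset$ and constants $\Delta_t=1$ yields
\[ \GS{r} s = \sup_{\Dset,\DNset}\abs{s(\Dset,r)-s(\DNset,r)} \le \frac{1}{|\QRset|}\sum_{t\in\QRset} 1 = 1, \]
as claimed.

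The only real subtlety here is one of framing rather than calculation: the per-query difference must be viewed as a single function over the space of neighbouring pairs so that the averaging bound of Lemma~\ref{lem:gs_all_qr} applies verbatim, and one must check that the $\lambda\norm{\Alpha^t}_1$ terms genuinely cancel so the single-query bound of Lemma~\ref{lem:gs} is available unchanged. Once set up this way the result is immediate, requiring none of the case analysis needed for the relative-error sensitivity of Theorem~\ref{theorem:gs_relative}.
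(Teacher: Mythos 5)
Your proposal is correct and follows essentially the same route as the paper: bound each per-query absolute score's sensitivity by $1$ via Lemma~\ref{lem:gs} (where the $\lambda\norm{\Alpha^t}_1$ terms cancel between neighbours), then lift to the average with Lemma~\ref{lem:gs_all_qr} taking each $\Delta_t=1$. The extra care you take in framing the per-query differences as functions over neighbouring pairs is a sound elaboration of the same argument, not a different one.
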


\begin{proof}
The claim bounds sensitivity of the absolute error score function 
derived from the averaged error bound of Corollary~\ref{cor:gs_all}.
The result follows immediately from Lemma~\ref{lem:gs_all_qr} by
taking: functions $f_i$ as the  sensitivities of the
individual \QR-specific score functions; and the
$\Delta_i$ bounds on each $f_i$ as the single-query sensitivity bound from
Lemma~\ref{lem:gs}.
\end{proof}
\fi

\begin{corollary}\label{corollary:gs_relative_avg}
The response-dependent sensitivity of averaged relative error
score function~\eqref{eq:score-avg} over query set \QRset, for any
$r\in \Gset$, is bounded
\begin{eqnarray*}
	\GS{r} s &\leq& 
	\frac{1}{\delta(\delta\left|\Dset\right|+1)} + \frac{(\lambda/|\QRset|)\sum_{t\in\QRset}  \norm{\Alpha^{r,t}}_{1}}{\delta\left|\Dset\right|(\delta\left|\Dset\right|+1)} + \frac{1}{\delta\left|\Dset\right| + \delta}\enspace,
\end{eqnarray*}
where $\delta \in (0,1)$ defines sanity bound constant
$\rho = \delta\left|\Dset\right|$. We introduce superscripts $r, t$ to \Alpha,
to highlight explicit dependence on $r\in\Gset, t\in\QRset$.
\end{corollary}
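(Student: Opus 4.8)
The plan is to lift the single-query sensitivity bound of Theorem~\ref{theorem:gs_relative} to the full tuning set $\QRset$ by invoking the averaging inequality of Lemma~\ref{lem:gs_all_qr}, exactly mirroring the argument used for the absolute-error score. First I would observe that the averaged relative-error score~\eqref{eq:score-avg} is precisely the arithmetic mean over $t\in\QRset$ of the per-query scores analysed in Theorem~\ref{theorem:gs_relative}. Writing $s_t(\cdot,r)$ for the single-query score attached to a fixed query $t$, we have $s(\Dset,r)=\frac{1}{|\QRset|}\sum_{t\in\QRset} s_t(\Dset,r)$, so by linearity the neighbour difference decomposes as $s(\Dset,r)-s(\DNset,r)=\frac{1}{|\QRset|}\sum_{t\in\QRset}\bigl(s_t(\Dset,r)-s_t(\DNset,r)\bigr)$.

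The key step is to apply Lemma~\ref{lem:gs_all_qr} with index set $\mathcal{I}=\QRset$, domain $\mathcal{X}$ taken to be the set of neighbouring pairs $(\Dset,\DNset)$ with the response $r\in\Gset$ held fixed, and functions $f_t(\Dset,\DNset)=s_t(\Dset,r)-s_t(\DNset,r)$. By definition the response-dependent sensitivity $\GS{r}s=\sup_{\mathrm{neighbours}}\bigl|s(\Dset,r)-s(\DNset,r)\bigr|$ is exactly $\sup_{x\in\mathcal{X}}\bigl|\frac{1}{|\QRset|}\sum_{t} f_t(x)\bigr|$, so the corollary's left-hand side matches the lemma's conclusion verbatim. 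The hypothesis $\sup_{x}|f_t(x)|\le\Delta_t$ is supplied directly by Theorem~\ref{theorem:gs_relative}, which furnishes the per-query bound $\Delta_t=\frac{1}{\delta(\delta|\Dset|+1)}+\frac{\lambda\norm{\Alpha^{r,t}}_1}{\delta|\Dset|(\delta|\Dset|+1)}+\frac{1}{\delta|\Dset|+\delta}$. The lemma then yields $\GS{r}s\le\frac{1}{|\QRset|}\sum_{t\in\QRset}\Delta_t$.

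Finally I would simplify the averaged sum: the first and third summands of $\Delta_t$ carry no dependence on $t$, so averaging them over $\QRset$ leaves them unchanged, while only the middle term varies with $t$ through $\norm{\Alpha^{r,t}}_1$, collapsing to $\frac{(\lambda/|\QRset|)\sum_{t\in\QRset}\norm{\Alpha^{r,t}}_1}{\delta|\Dset|(\delta|\Dset|+1)}$; recombining the three pieces reproduces the claimed bound. There is essentially no computation to grind here, since all of it was discharged in establishing Theorem~\ref{theorem:gs_relative}. The one care-point, and the only real (and modest) obstacle, is the abstract bookkeeping in the second step: I must verify that response-dependent sensitivity—a supremum over neighbours at a \emph{fixed} $r$—is literally the averaged-function supremum that Lemma~\ref{lem:gs_all_qr} controls, so that the per-query bounds from Theorem~\ref{theorem:gs_relative} can be inserted termwise under the average with the single database $\Dset$ (the smaller of the two) entering the denominators consistently across all $t$.
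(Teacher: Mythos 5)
Your proposal is correct and follows essentially the same route as the paper's proof: the paper likewise obtains the corollary by applying Lemma~\ref{lem:gs_all_qr} with the $f_i$ taken as the neighbour-differences of the individual \QR-specific score functions and the $\Delta_i$ taken as the single-query sensitivity bounds from Theorem~\ref{theorem:gs_relative}, then averaging. Your additional bookkeeping (noting that only the middle term depends on $t$, and that $|\Dset|$ enters all denominators consistently) is a faithful, slightly more explicit rendering of the same argument.
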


\begin{proof}
	The claim bounds sensitivity of the score function \eqref{eq:score-avg}
	derived from the
averaged error bound of Corollary~\ref{cor:gs_all}. The result follows immediately
from Lemma~\ref{lem:gs_all_qr} by taking: functions $f_i$ as the 
sensitivities of the individual \QR-specific score functions; and the
$\Delta_i$ bounds on each $f_i$ as the single-query sensitivity bound from
Theorem~\ref{theorem:gs_relative}.
\end{proof}

\begin{figure*}[t!]
	\centering
	\subfloat[Storage, with $8,938$ points.\label{fig:storage}]{
		\includegraphics[width = .3 \textwidth]{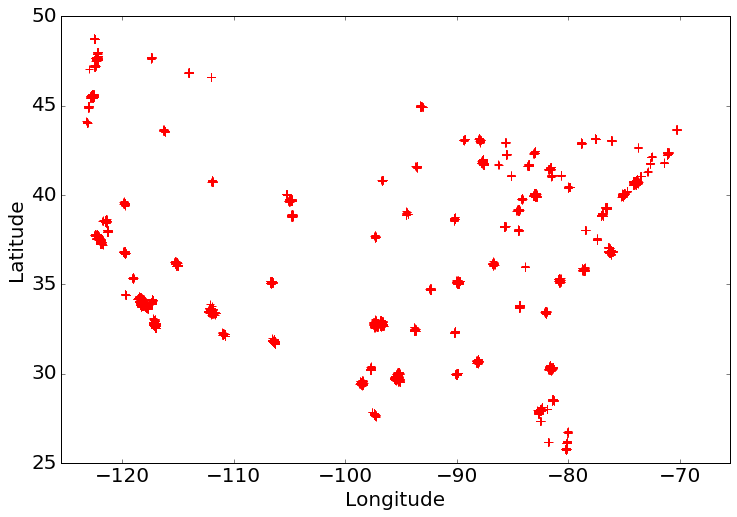}}
	\subfloat[Landmark, with $869,976$ points.\label{fig:landmark}]{
		\includegraphics[width = .3 \textwidth]{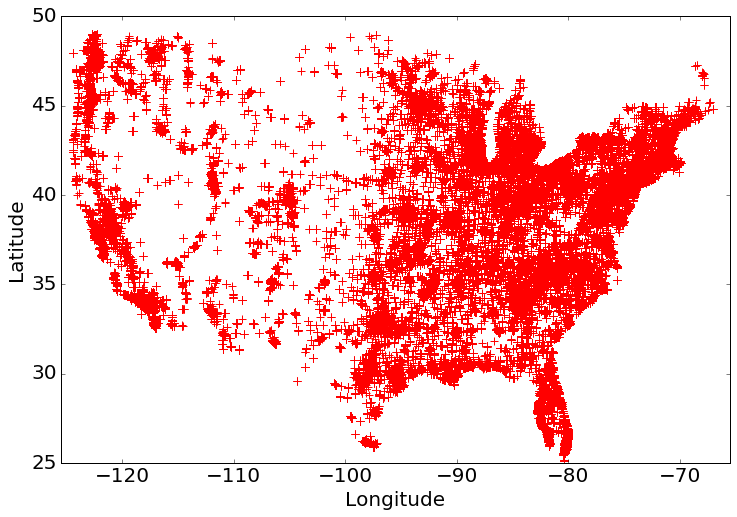}}
	\subfloat[Gowalla Check-ins, with $6,442,841$ points.\label{fig:gowalla}]{
		\includegraphics[width = .29 \textwidth]{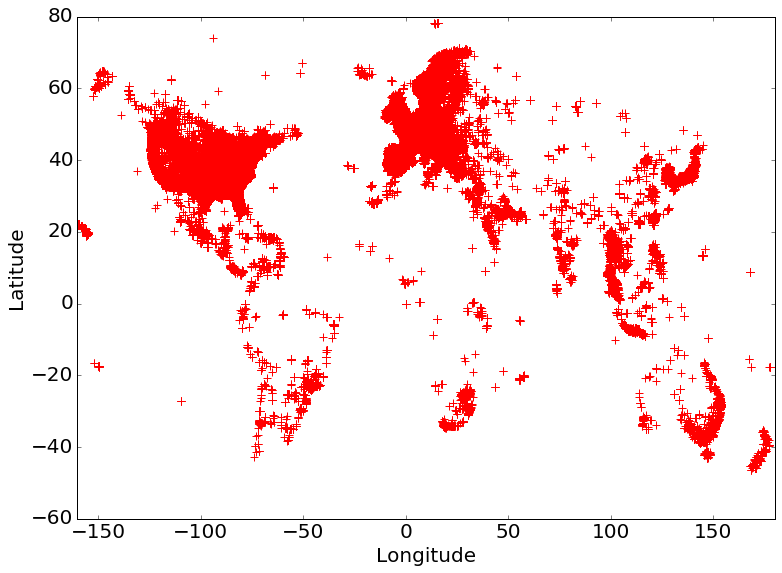}}
	\caption{Visualising the experimental datasets.}
	\label{fig:datasets}
\end{figure*}

\subsection{Main Results: Privacy \& Utility Guarantees}\label{sec:exp-utility}
With the Phase~2 error bounds and sensitivity of these bounds in hand, we are
able to present general guarantees for the end-to-end 
Algorithm~\ref{algo:private-grid}.

\begin{table*}[t!]
	\caption{Experimental settings. This table demonstrates the range of parameters, bolded are those that are varying.}\label{table:experiment}
	\centering
	\scalebox{.82}{
		\bigskip
		\begin{tabular}{| c | c | c | c | c | c | c |}
			\hline
			\textbf{Dataset, Size} & \textbf{Grid Size (g)} &  \textbf{$ \delta $} & \textbf{Sanity Bound, $ \rho = \delta \cdot \abs{\Dset} $} & \textbf{\QR Size (\%)}  & \textbf{Privacy Budget, $\epsilon$} & \textbf{$\epsilon_1$ (\% of $\epsilon$)} \\
			\hline
			\hline
			Storage, 8,938 points & 30, 40, 50, 60, 70, 80 &  0.1 & 893.8 & \textbf{1, 4, 9, 16, 25, 64} & 1 & 20\\
			\hline
			Landmark, 869,976 points & 200, 250, 300, 350, 400 &  0.1 & 869,9.76 & \textbf{1, 4, 9, 16, 25, 64} & 1 & 20\\
			\hline
			Gowalla, 6,442,841 points & 300, 400, 500, 600, 700, 800 &  0.1 & 6,442.841 & \textbf{1, 4, 9, 16, 25, 64} & 1 & 20\\
			\hline
			Storage, 8,938 points & \textbf{40, 60, 80} &  \textbf{0.002, 0.001, 0.1} & \textbf{17.88, 89.38, 893.8} &  1, 4, 9, 16, 25, 64 & 1 & 20\\
			\hline
			Storage, 8,938 points & 30, 40, 50, 60, 70, 80 & 0.1 & 893.8 & 1 & \textbf{0.2,0.4,0.6, 0.8, 1} & 20\\
			\hline
			Storage, 8,938 points & 30, 40, 50, 60, 70, 80 & 0.1 & 893.8 & \textbf{1, 4, 9, 16, 25, 64} & 1 & \textbf{20, 25, 50, 75}\\
			\hline
			
	\end{tabular}}
\end{table*}

\begin{theorem}\label{theorem:exp-privacy}
	Algorithm~\ref{algo:private-grid} preserves
	$(\epsilon_1+\epsilon_2)$-differential privacy.
\end{theorem}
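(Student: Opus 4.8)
The plan is to treat Algorithm~\ref{algo:private-grid} as the sequential composition of its two phases, establish the relevant privacy guarantee for each in isolation, and then compose. First I would argue that Phase~1 preserves $\epsilon_1$-differential privacy. Phase~1 is exactly the exponential mechanism (Theorem~\ref{theorem:exp}) applied to the score~\eqref{eq:score-avg}, except that it calibrates each candidate $r\in\Gset$ by its own response-dependent sensitivity bound $\GS{r}$ from Corollary~\ref{corollary:gs_relative_avg} rather than a single worst-case $\Delta s$. I would verify that this refinement preserves the $\epsilon_1$ guarantee by running the standard likelihood-ratio argument term by term: for neighbouring $\Dset,\DNset$ and a fixed output $r$, the exponent ratio is at most $\exp(\epsilon_1/2)$ because $|s(\Dset,r)-s(\DNset,r)|\le\GS{r}$, while the ratio of normalising sums is at most $\exp(\epsilon_1/2)$ because each summand over $r'$ carries the same $\GS{r'}$ in numerator and denominator and again $|s(\Dset,r')-s(\DNset,r')|\le\GS{r'}$. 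The product yields $\exp(\epsilon_1)$.

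Next I would show that Phase~2 preserves $\epsilon_2$-differential privacy for each fixed grid size. Conditioned on any chosen \gstar, Phase~2 rebuilds $\Hb$ and releases $\Hn=\Hb+\Yb$ with $\Yb\sim Lap(0;\lambda)$, $\lambda=1/\epsilon_2$. Adding or removing one point of \Dset changes exactly one cell count by one, so the $L_1$-global sensitivity of histogram release is $1$ (the footnote); this is precisely the Laplace mechanism of Theorem~\ref{theorem:lap} calibrated with $\Delta f/\epsilon_2=1/\epsilon_2=\lambda$, hence $\epsilon_2$-differentially private, uniformly over every possible \gstar.

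The combination is the delicate step, because Phase~2's mechanism is selected by the random output \gstar of Phase~1, so the two stages are not independent and the non-adaptive form of Lemma~\ref{lem:seq} does not apply verbatim. I would therefore compose adaptively. Writing the overall release as the pair $(\gstar,\Hn)$, for neighbouring $\Dset,\DNset$ and any realisable outcome $(g,h)$ I would factor its probability (density) as the Phase~1 probability of selecting $g$ times the Phase~2 density of $h$ given $g$. The first factor is bounded by $\exp(\epsilon_1)$ times its $\DNset$ counterpart by the first step, and the second by $\exp(\epsilon_2)$ times its $\DNset$ counterpart by the second step, uniformly in $g$. Multiplying the two likelihood ratios gives $\exp(\epsilon_1+\epsilon_2)$, and integrating over any measurable output set $S$ establishes Definition~\ref{def:dp} with budget $\epsilon_1+\epsilon_2$.

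I expect the main obstacle to be justifying the per-response calibration in Phase~1: one must confirm that replacing the global $\Delta s$ by the response-dependent $\GS{r}$ does not break the exponential-mechanism ratio, and that the bound of Corollary~\ref{corollary:gs_relative_avg} holds for every neighbouring pair and every $r$ simultaneously. The term-by-term cancellation sketched above is the key, and it hinges on each response carrying its own sensitivity consistently through both the numerator and the normaliser. The secondary subtlety is that the composition must be adaptive; this is handled cleanly by the factorisation once the Phase~2 guarantee is known to hold uniformly over all grid sizes.
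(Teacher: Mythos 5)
Your proof takes essentially the same route as the paper's: Phase~1 is the exponential mechanism calibrated by the response-dependent sensitivity of Corollary~\ref{corollary:gs_relative_avg} (hence $\epsilon_1$-differentially private), Phase~2 is the Laplace mechanism with global sensitivity $1$ (hence $\epsilon_2$-differentially private), and the two compose to $\epsilon_1+\epsilon_2$. You additionally spell out two steps the paper leaves implicit --- the term-by-term likelihood-ratio argument validating per-response calibration, and the adaptive (rather than non-adaptive) composition across the two phases --- and both are handled correctly.
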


\begin{proof}
Phase~1 of the algorithm corresponds to the exponential mechanism, in that its
release is sampled according to the exponential mechanism's response
distribution, using the score function \eqref{eq:score-avg}. Since the
algorithm uses response-dependent sensitivity \GS{r} as bounded in
Corollary~\ref{corollary:gs_relative_avg} with privacy parameter $\epsilon_1$,
it preserves $\epsilon_1$-differential privacy by Theorem~\ref{theorem:exp}.
Phase~2 uses the resulting sanitized \gstar which expends no further privacy
budget, but runs the Laplace mechanism with sensitivity 1 (global sensitivity
for histogram release) with privacy parameter $\epsilon_2$. By
Theorem~\ref{theorem:lap} the second phase therefore preserves
$\epsilon_2$-differential privacy. Finally by sequential composition
Lemma~\ref{lem:seq}, the algorithm in total preserves differential privacy
at level $\epsilon_1+\epsilon_2$.
\end{proof}

Our utility guarantee follows from our careful choice of score function,
as itself a bound on algorithm error, combined with utility of the
exponential mechanism~\cite{McSherry07,privacybook14}.

\begin{theorem}\label{theorem:exp-utility}
	Let $\Gset_{OPT} = \{r \in \Gset: s(D, r) = OPT_s(D)\}$ be the set
	of truly optimising grid sizes---\ie each achieves the maximum
	score of all \Gset grid sizes $OPT_s(D) = \max_{r\in\Gset}s(D,r)$; and
	let $\gstar$ be the output of Algorithm~\ref{algo:private-grid}.
	Then for all $\tau>0$
	\begin{eqnarray*}
		\Pr\left[s(\gstar) \le OPT_s(D) - \frac{2\Delta}{\epsilon_1}\left(log\left(\frac{\abs{\Gset}}{\abs{\Gset_{OPT}}}\right) + \tau \right) \right] &\le& e^{-\tau}\enspace,
	\end{eqnarray*}
	where $\Delta=\max_{r\in\Gset}\GS{r}$, each as defined in
	Corollary~\ref{corollary:gs_relative_avg}.
\end{theorem}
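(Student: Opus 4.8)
The plan is to recognise this as the standard utility guarantee for the exponential mechanism (\cf Theorem~\ref{theorem:exp}), specialised to Phase~1 of Algorithm~\ref{algo:private-grid}, whose output \gstar is drawn with probability proportional to $\exp(\epsilon_1 s(\Dset,r)/2\Delta)$ over $r\in\Gset$. I would prove the tail bound directly from this sampling law, by controlling the probability mass the mechanism places on ``bad'' grid sizes---those whose score falls a threshold $t>0$ below the optimum $OPT_s(D)$---through an elementary bound on a ratio of exponential sums.

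First, fix $t>0$ and define the bad set $\Gset_{\mathrm{bad}} = \{r\in\Gset : s(\Dset,r) \le OPT_s(D) - t\}$, so that the event in the theorem is exactly $\{\gstar\in\Gset_{\mathrm{bad}}\}$ once $t$ is chosen appropriately. Writing $\Pr[\gstar\in\Gset_{\mathrm{bad}}]$ as the ratio of $\sum_{r\in\Gset_{\mathrm{bad}}}\exp(\epsilon_1 s(\Dset,r)/2\Delta)$ to $\sum_{r\in\Gset}\exp(\epsilon_1 s(\Dset,r)/2\Delta)$, I would bound numerator and denominator separately. For the numerator, every term has score at most $OPT_s(D)-t$ and there are at most $\abs{\Gset}$ of them, giving the upper bound $\abs{\Gset}\exp(\epsilon_1(OPT_s(D)-t)/2\Delta)$. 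For the denominator, I would discard all but the optimal grid sizes, each contributing exactly $\exp(\epsilon_1 OPT_s(D)/2\Delta)$ since $s(\Dset,r)=OPT_s(D)$ on $\Gset_{OPT}$, yielding the lower bound $\abs{\Gset_{OPT}}\exp(\epsilon_1 OPT_s(D)/2\Delta)$.

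Dividing, the $OPT_s(D)$ terms cancel and the ratio collapses to $(\abs{\Gset}/\abs{\Gset_{OPT}})\exp(-\epsilon_1 t/2\Delta)$. Substituting the threshold $t = \tfrac{2\Delta}{\epsilon_1}\big(\log(\abs{\Gset}/\abs{\Gset_{OPT}}) + \tau\big)$ makes the leading factor cancel against the exponential, leaving exactly $e^{-\tau}$, which is the claimed inequality.

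The step I expect to require the most care is reconciling the worst-case sensitivity $\Delta=\max_{r\in\Gset}\GS{r}$ appearing in the statement with the response-dependent sensitivity \GS{r} that Algorithm~\ref{algo:private-grid} actually uses inside the sampling weights. The numerator bound survives this substitution easily, since $\GS{r}\le\Delta$ and the bad scores are bounded above; the delicate point is the denominator, where---because the scores are negative relative errors---replacing \GS{r} by $\Delta$ in the optimal terms is not automatically a valid lower bound. I would resolve this by invoking the guarantee for the exponential mechanism calibrated with the \emph{uniform} sensitivity $\Delta$, which is precisely the quantity named in the theorem, so that the clean cancellation above applies verbatim and the bound holds as stated for the worst-case $\Delta$.
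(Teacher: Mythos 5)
Your argument is the standard McSherry--Talwar tail bound for the exponential mechanism, which is exactly what the paper relies on: its ``proof'' of this theorem is a one-line appeal to \cite{McSherry07,privacybook14} together with the remark that the score was deliberately chosen to be the negative of an error bound, so you and the paper take the same route and your numerator/denominator calculation is the correct content behind that citation. Where you go beyond the paper is in flagging the mismatch between the worst-case $\Delta=\max_{r\in\Gset}\GS{r}$ in the statement and the response-dependent $\GS{r}$ that Algorithm~\ref{algo:private-grid} actually places in its sampling weights; this is a real wrinkle that the paper passes over in silence, and your observation that the denominator is the delicate side is right: because the scores are negative, replacing $\GS{r}$ by $\Delta$ makes the optimal terms $\exp(\epsilon_1\, OPT_s(D)/2\GS{r})$ \emph{larger}, so the clean lower bound does not survive the substitution. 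Just be aware that your resolution---invoking the guarantee for the mechanism calibrated with uniform $\Delta$---proves the theorem for a variant of Phase~1 rather than for the per-response weighting the algorithm implements; to cover the algorithm as written you would have to carry $\min_{r\in\Gset_{OPT}}\GS{r}$ through the denominator, which weakens the bound by a factor of $\exp\bigl(\tfrac{\epsilon_1}{2}\,OPT_s(D)\,(\tfrac{1}{\Delta}-\tfrac{1}{\min_{r\in\Gset_{OPT}}\GS{r}})\bigr)\ge 1$ unless $OPT_s(D)$ is near zero. The paper has the identical unacknowledged gap, so your proposal is no weaker than the published argument---it is simply more honest about where the seam lies.
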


With high probability the selected
$\gstar \in \Gset$ has a score close to $OPT_s(D)$ by more than
an additive factor of
$O((\Delta/\epsilon_1) \log\abs{\Gset})$~\cite{McSherry07,privacybook14}
\ie the error has only logarithmic dependence on $\abs\Gset$. Notably
the bound depends on $\epsilon_2$ (in addition to $\epsilon_1$), through $\Delta$.

\subsection{Discussion of Sensitivity Bound}\label{sec:sensitivity-disc}
Conventionally the exponential mechanism is used with a global
bound on score/quality function sensitivity $\Delta$, so as to be
independent of response. 
Following this approach yields two alternative, potentially more conveniently implemented, sensitivity bounds of

\begin{eqnarray}
	&& \frac{1}{\delta(\delta\left|\Dset\right|+1)} + \frac{\frac{\lambda}{\left|\QRset\right|}\sum_{t \in \QRset} \max_{r} \norm{\Alpha^{r,t}}_{1}}{\delta\left|\Dset\right|(\delta\left|\Dset\right|+1)} + \frac{1}{\delta\left|\Dset\right| + \delta} \label{eq:gs_rel_app2} \\
	&\leq& \frac{1}{\delta(\delta\left|\Dset\right|+1)} + \frac{\lambda \max \{g^2\}}{\delta\left|\Dset\right|(\delta\left|\Dset\right|+1)} + \frac{1}{\delta\left|\Dset\right| + \delta}\enspace, \label{eq:gs_rel_app1}
\end{eqnarray}
where the first bound has removed dependence on grid size by
simply maximising over grid size in the single-query sensitivity 
bound, then averaging. The second sensitivity bound follows from
the observation that the $\|\Alpha^{r,t}\|_1$ terms each quantify
the effective number of cells overlapped by the \QR $t$, which
cannot be any larger than the total number of cells in the histogram.
This in turn is maximised by the grid size with largest number of 
cells. 
\ifdefined\ARXIV
Figure~\ref{fig:maxAlpha} shows that maximising over the grid sizes per \QR, will always yield the largest grid size.

\begin{figure}[b!]
	\centering
	\subfloat[$2 \times 2$ grid \label{fig:2x2grid}]{
		\includegraphics[width = .3\columnwidth]{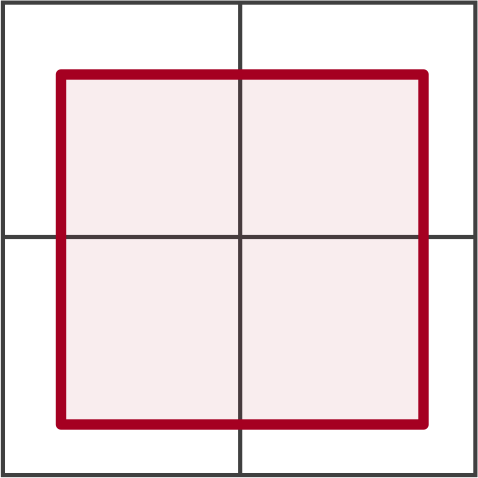}}
	\hspace{.2cm}
	\subfloat[$4 \times 4$ grid \label{fig:4x4grid}]{
		\includegraphics[width = .3\columnwidth]{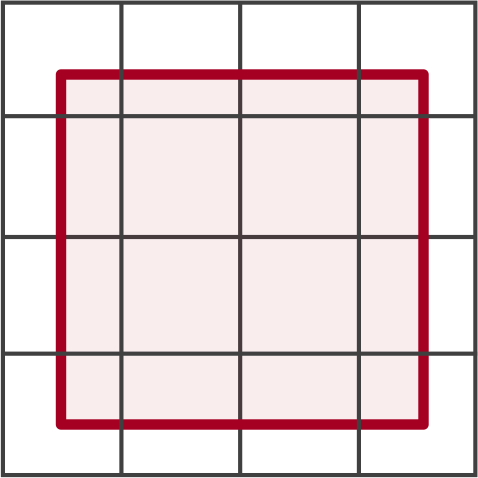}}
	\caption{Maximising over grid sizes, will be always the largest one.} 
	\label{fig:maxAlpha}
\end{figure}
\fi

Both of these alternative approaches would be natural to use with
the exponential mechanism, as response-independent global 
sensitivities. However, they are both upper-bounds on our
response-dependent sensitivity and as such can lead to lower
utility. We demonstrate this effect experimentally in 
Section~\ref{sec:sensitivity-results}.

\begin{figure*}[t!]
	\centering
	\subfloat[Storage dataset. \label{fig:Storage_QR_error_log}]{
		\includegraphics[scale = .21]{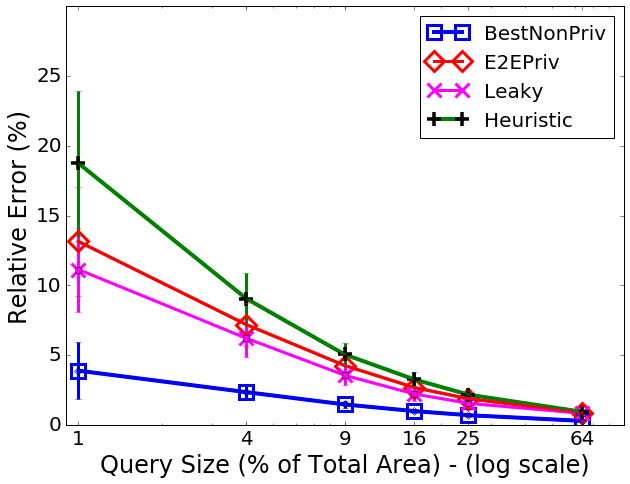}}
	\hfill
	\subfloat[Landmark dataset. \label{fig:Landamrk_QR_error_log}]{
		\includegraphics[scale = .21]{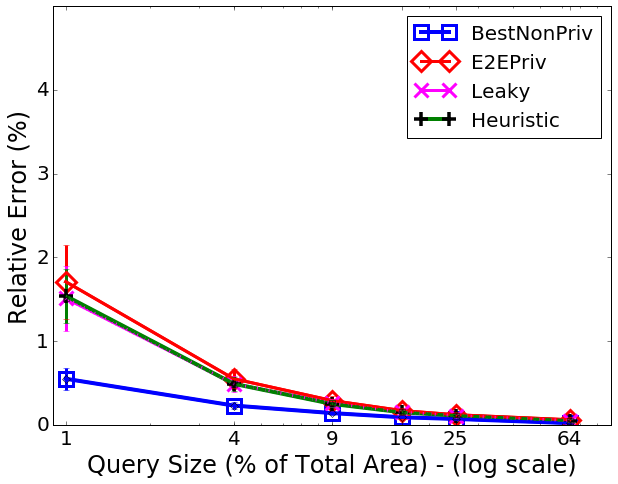}}
	\hfill
	\subfloat[Gowalla Check-ins dataset.\label{fig:Gowalla_QR_error_log}]{
		\includegraphics[scale = .21]{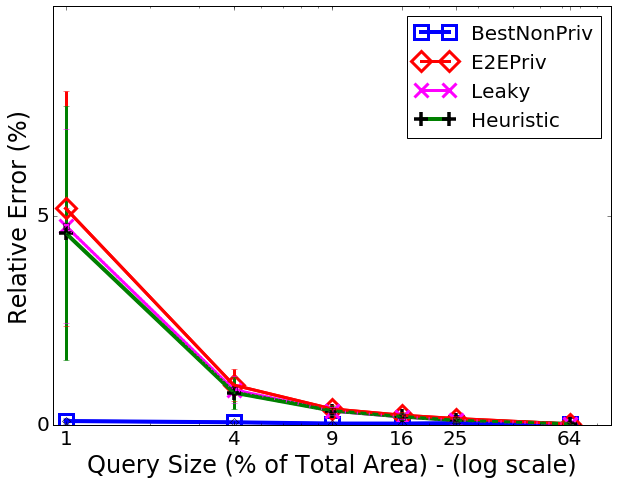}}
	\caption{Effect of QR size (\% of total area) in performance; computing median relative error per query size.}
	\label{fig:datasets_qr_error}
\end{figure*}

\section{Experimental Study}\label{sec:experiment}
We now describe our comprehensive experimental study.

\subsection{Baselines}

We employ three baselines mechanisms in our comprehensive evaluation.
Compared to our truly end-to-end private approach, 
these approaches are either partially private or not differentially
private on new datasets.\\[-0.5em]

\myparagraph{\heuristic}~\cite{QardajiYL13} computes grid size via
Equation~\eqref{eq:heuristic}, as described in
Sections~\ref{sec:related_work} and~\ref{sec:heuristic}. The authors
select $c=10$ based on tuning to the datasets used here. We expect privacy
only when $c$ is not tuned, and as argued, it is not adaptive to the
underlying data, but is based on (partly) principled derivation.\\[-0.5em]

\myparagraph{\leaky} is a semi-private approach, tuning grid size by
adding noise to the original histogram counts (privately) but then
comparing different grid sizes on sensitive data non-privately. 
The entire privacy budget is allocated to the histogram release, none
to (non-private) tuning.\\[-0.5em]

\myparagraph{\best} non-privately releases
the unpertrubed histogram. Without considering noise, tuning optimizes
aggregate error alone, and so always chooses the largest grid size.\\[-0.5em]

\subsection{Datasets}

We run experiments on three datasets---Storage, Landmark, Gowalla
Check-ins---ranging in size, uniformity and sparsity as visualised in
Figure~\ref{fig:datasets}. These datasets were used in~\cite{QardajiYL13}
to evaluate and in fact tune \heuristic (finding $c=10$). In this way, we deliver
\heuristic a significant advantage, providing a fair and comprehensive
comparison between our mechanism and the baselines. 

\ifdefined\ARXIV
Two datasets are in the USA. The first dataset, Storage~\ref{fig:storage}, consists of US storage facility locations composed of national chain storage facilities in addition to locally owned and operated facilities. This is a small dataset of 8,938 points. Geographical coordinates range over (-125.5, -65.5) and (25.0, 50.0) for longitude and latitude respectively. Geographical distances are 60 for Lon axis and 25 for Lat axis. The distance in metres for the x axis is $\approx$6000$Km$, and for y axis is $\approx$2800$km$. The second dataset, Landmark~Figure~\ref{fig:landmark}, is a large dataset of 869,976 points. This is a dataset of locations of landmarks in the 48 US continental states. The listed
landmarks range from schools and post offices to shopping centres, correctional facilities, and train stations from the 2010 Census TIGER point landmarks. As indicated in~\cite{QardajiYL13}, this dataset appears to match the population distribution in the USA. In terms of domain specification, size, longitude and latitude ranges, the dataset is identical to Storage. 

The third and final dataset is the check-in dataset obtained from the Gowalla location-based social network, where users share their locations by checking in. This dataset has the time and location information of check-ins made by users over the period of February 2009--October 2010. For the purpose of this experiment only the location information has been used. This dataset consists of 6,442,841 points, making it a large-sized dataset spanning the entire world map, Figure~\ref{fig:gowalla}. The range for Longitude (x-axis) and Latitude (y-axis) are (80.0, -60.0) and (180.0, -160.0) respectively. Lon axis distance in the geographical system is 340 and Lat is 140, where in the metric system these correspond to $\approx$8,000km and $\approx$16,000km respectively.
\fi

\subsection{Parameter Settings}\label{sec:experiment-param}

Table~\ref{table:experiment} summarises settings, with bolded parameters varying.  
The initial values for the experiment's parameters are as follows. The suggested grid size for Storage by \heuristic Equation~\eqref{eq:heuristic} is $29.88$, which we round to $30$. For Landmark and Gowalla datasets \heuristic selects 295 and 803 respectively. \QR sizes given as input to \leaky and \EE approaches for tuning phases are in the range of $\{.1, .2, .3, .4, .5, .8\}$, which indicates the percentage of domain width and height, \eg .3 means $9\%$ of the total area. We have used the same QR sizes but with different random positions to evaluate all the techniques. $\delta$ to be used for the sanity bound, $\rho$, in bounding relative error during \EE tuning is set to 0.1, 0.01, and 0.001 for Storage, Landmark and Gowalla dataset, respectively (relating to the dataset size: for larger datasets we use smaller $\delta$). $\epsilon$ privacy was initially set to $1$. In terms of allocating privacy budget to our approach \EE, the initial setting was $\epsilon_1=20\%$ and $\epsilon_1=80\%$, which we later vary in Section~\ref{sec:budget}.
Although in the literature~\cite{Cormode12PSD,QardajiYL13} few specific QR sizes are explored, we vary QR's over the entire range of the map area. 
In experiments requiring a fixed QR, we choose the smallest (most challenging) QR of 1\% of total area.

\subsection{Evaluation Metrics}

In our evaluations we use the standard relative error without the
sanity bound---we have no need to control sensitivity (as within our
mechanism) and errors are more interpretable. Similar results are
observed when the sanity bound is introduced. 
The most accurate
method will always be \best as it is non-private, experiences zero
perturbation error and tunes optimally. 
Each experiment is repeated 100 times and per \QR size we allocate
100 random positions as our set of query regions. 

\subsection{Effect of Various Query Regions}\label{sec:qr_size}

In this section the median relative error is computed for varying \QR size, to evaluate \EE compared to the baselines. 

Consider first \heuristic, and observe that it can perform well on its experimentally-tuned datasets, Figure~\ref{fig:datasets_qr_error}---recall that it was on these datasets that its $c$ parameter was non-privately tuned. In the results, our approach, \EE, despite being fully differentially private is competitive with \heuristic, and sometimes superior. For Storage Figure~\ref{fig:Storage_QR_error_log}, {\EE}'s error for smallest QR is 13\% while \heuristic only achieves 19\%. This dataset has been chosen by the authors in~\cite{QardajiYL13} to show that their guideline holds for both large and small datasets. However as depicted in Figure~\ref{fig:grid_error} the chosen grid size is not optimal and \EE can outperform the result due to its data-dependence. For Landmark Figure~\ref{fig:Landamrk_QR_error_log}, the error for the smallest QR is less than 2\% and for Gowalla 5\% (Figure~\ref{fig:Gowalla_QR_error_log}). Computed errors for smaller query regions are generally higher, due to the fact that errors for larger queries cancel out. 

As expected \leaky is always superior to \heuristic and motivates the necessity of having a private tuning technique. It demonstrates that data-dependent tuning improves on grid size selection significantly. However, previous approaches have been non-private. Furthermore, \heuristic is not data dependent and so offers no guarantee it will work. In fact, where it has worked the best, it has been tuned on the data, and not simultaneously private. The existing open problem has been for a mechanism somewhere in between, that is private but data dependent. 
For the remainder of our experiments, we focus only on the Storage dataset.


\subsection{Effect of Privacy Parameter $\epsilon$}\label{sec:epsilon}
We vary the total $\epsilon$ budget to explore its impact on all considered techniques, while fixing \QR to be 1\% of the total area: \cf Figure~\ref{fig:epsilon}. As expected, by increasing privacy, accuracy decreases. However somewhat surprisingly, \EE outperforms \heuristic even though \heuristic has been non-privately tuned on the dataset.

\begin{figure*}[t!]
	\centering
	\subfloat[\label{fig:epsilon}]{
	\includegraphics[scale = .21]{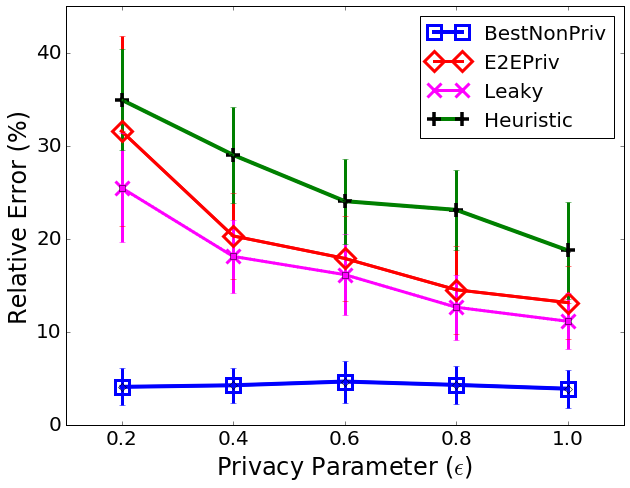}}
	\hfill
	\subfloat[\label{fig:budget}]{
	\includegraphics[scale = .21]{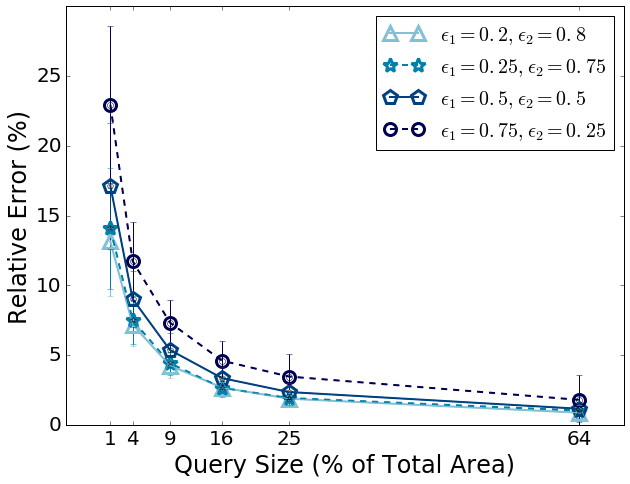}}
	\hfill
	\subfloat[\label{fig:Storage_rho}]{
	\includegraphics[scale = .21]{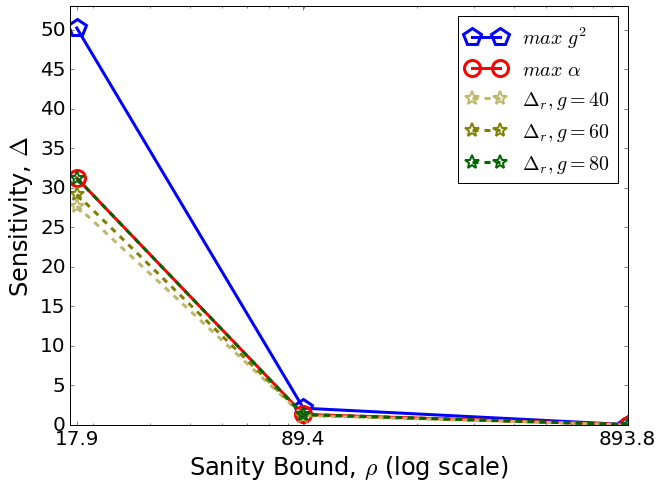}}
	\caption{Effect of (a) varying $\epsilon$ on histogram utility, computed as median relative error; (b) various $\epsilon_1$ and $\epsilon_2$ on histogram utility, computed as median relative error; (c) sanity bound $\rho$ on sensitivity.}
\end{figure*}

\subsection{Effect of Privacy Budget Allocation}\label{sec:budget}
Recall that our approach \EE comprises two phases run sequentially, with total privacy budget split between the two phases. In this section, we demonstrate the effect of different budget allocations to each phase and its impact on the released histogram utility via computing the median relative errors for various test \QR. As shown in Figure~\ref{fig:budget}, for $\epsilon_1 \le 50\%$ the utility of our mechanism remains almost invariant, providing a useful guide for allocating privacy budget.

\subsection{Effect of Sanity Bound $\rho$ on Sensitivity}\label{sec:sensitivity-results}
Figure~\ref{fig:Storage_rho} presents the effect of different $\delta$ parameters, and consequently different $\rho=\delta|\Dset|$, on computed sensitivity bounds by the various approaches derived in Section~\ref{sec:analysis}: our preferred response-dependent bound (Corollary~\ref{corollary:gs_relative_avg}), and the two looser response-independent sensitivities (Equations~\ref{eq:gs_rel_app2} and~\ref{eq:gs_rel_app1}). The results are shown for
grid sizes varying through 40, 60 and 80\%. As shown, the response-dependent $\GS{r}$ does achieve tighter estimates compared to the global alternatives. This difference becomes more significant for reduced sanity bounds, \eg when $\delta=0.002$ yielding $\rho=17.9$. The $\max \alpha$ alternative sees equivalent values to the maximum grid size approach of response dependent sensitivity.

These results confirm our expectation that using more careful response-dependent sensitivity in the exponential mechanism as applied to \EE tuning's Phase~1, can lead to better sensitivity estimates which can in-turn lead to superior utility at no cost to privacy.

\section{Concluding Remarks}\label{sec:conclusion}

In this paper we propose a first end-to-end differentially-private mechanism
for releasing parameter-tuned spatial data structures. Our mechanism \EE leverages
a general-purpose concept of tuning via privately-optimising bounds on error:
with the bounds on error derived from utility bounds on the data structure
release mechanism (in this case existing an application of the Laplace mechanism for
releasing histograms); and the private minimisation of these bounds
via the exponential mechanism. Key challenges in
accomplishing our results included the derivation of error bounds and bounding
of these data-dependent error bounds' sensitivity to perturbation.
As a result of our careful analysis, we provide a comprehensive analysis of 
differential privacy and high-probability utility. 

Notably, our bounds on error central to parameter tuning, comprise terms reflecting both
aggregation error due to spatial partitioning and perturbation error due
to post-tuning differential privacy. Our sensitivity calculations
are response-dependent, permitting parameter tuning to achieve superior 
utility at no cost to privacy over coarse, global sensitivity approaches. 

Comprehensive experimental results on datasets of a range of scales,
levels of sparsity and uniformity, establish that our principled tuning-and-release
mechanism achieves competitive utility while preserving end-to-end differential privacy.

In the literature, parameter tuning has been previously accomplished either
non-privately (even tuning differentially-private mechanisms on test data)
or by applying fixed parameter guidelines. We establish that neither style of 
existing approach is sufficient, and that private parameter tuning is achievable,
and efficiently implementable.

\section*{Acknowledgements}

This work was supported in part by the Australian Research Council through grant DE160100584, and Data61/CSIRO through a NICTA PhD Scholarship.

\bibliographystyle{ACM-Reference-Format}
\bibliography{refs} 

\ifdefined\ARXIV
\appendix
\section{Appendix}\label{sec:appendix}
\subsection{Proof of Theorem~\ref{theorem:gs_relative}}
The theorem's proof proceeds by cases, based on where {\DNset}'s extra
point falls:
\begin{enumerate}[C{a}se 1.]
	\item Outside \QR, outside overlapping cells (Figure~\ref{fig:case1});
	\item Outside \QR, inside overlapping cells (Figure~\ref{fig:case2}); or
	\item Inside \QR, inside overlapping cells (Figure~\ref{fig:case3}).
\end{enumerate}

	We drop superscripts on $\a{i}$ for readability. Our task is to bound
	\begin{eqnarray*}
		&& GS(s_r) = \GS{r} s= \max_{\norm{\Dset - \DNset}_1 \le 1} \left|s(\Dset,r) - s (\DNset, r)\right|\\
		&\le& \left|\frac{1}{\max\{\sum_{i \in \Cset} \d{i}, \rho\}} \left(\left|\sum_{i \in \Cset}\a{i} \c{i} - \sum_{i \in \Cset} \d{i}\right| + \lambda  \norm{\Alpha}_{1}\right) - \right. \\ && \left. \frac{1}{\max\{\sum_{i \in \Cset} d'_{i}, \rho^{\prime}\}} \left(\left|\sum_{i \in \Cset}\a{i} c'_{i} - \sum_{i \in \Cset} d'_{i}\right| + \lambda  \norm{\Alpha}_{1}\right)\right|\\
		& \mbox{where} & \\
		&&  \rho = \delta \times\left|\Dset\right|\mbox{, }\rho  > 1\\
		&&  \rho^{\prime} = \delta \times\left|\DNset\right| = \delta \times(\left|\Dset\right| + 1) = \rho + \delta \\
	\end{eqnarray*}
	Throughout this proof we are going to have the following expression, which is always positive, so we can drop the absolute:
	\begin{eqnarray*}
		&& \left|\frac{1}{\max\{\sum_{i \in \Cset} \d{i}, \rho\}} - \frac{1}{\max\{\sum_{i \in \Cset} d'_{i}, \rho^{\prime}\}}\right| \\
	 &=& \frac{1}{\max\{\sum_{i \in \Cset} \d{i}, \rho\}} - \frac{1}{\max\{\sum_{i \in \Cset} d'_{i}, \rho^{\prime}\}}
	\end{eqnarray*}
	Each case as enumerated above, has sub-cases based on the value taken by the denominator. 
	
	\myparagraph{Case 1.} When the extra point is outside \QR \& outside overlapping cells:
	\begin{eqnarray*}
		&&\sum_{i \in \Cset} d'_{i} = \sum_{i \in \Cset} \d{i} \mbox{ \&  } c'_{i} = \c{i} \Rightarrow\\
		&& \left|\sum_{i \in \Cset}\a{i} c'_{i} - \sum_{i \in \Cset} d'_{i}\right| + \lambda  \norm{\Alpha}_{1} = \left|\sum_{i \in \Cset}\a{i} \c{i} - \sum_{i \in \Cset} \d{i}\right| + \lambda  \norm{\Alpha}_{1}\\
		&& \left|\sum_{i \in \Cset}\a{i} \c{i} - \sum_{i \in \Cset} \d{i}\right| + \lambda  \norm{\Alpha}_{1} \le \left|\Dset\right| + \lambda  \norm{\Alpha}_{1}\\
		&\GS{r} s \le& \left(\left|\sum_{i \in \Cset}\a{i} \c{i} - \sum_{i \in \Cset} \d{i}\right| + \lambda  \norm{\Alpha}_{1}\right) \cdot \left|\frac{1}{\max\{\sum_{i \in \Cset} \d{i}, \rho\}} - \frac{1}{\max\{\sum_{i \in \Cset} d'_{i}, \rho^{\prime}\}}\right| \\ 
		&\le& \left(\left|\Dset\right| + \lambda  \norm{\Alpha}_{1} \right) \cdot \underbrace{\left(\frac{1}{\rho} - \frac{1}{\rho + \delta}\right)}_{\frac{\delta}{\rho\left(\rho+\delta\right)}=\frac{1}{\left|\Dset\right|\left(\rho+\delta\right)}} \le \frac{1}{\rho+\delta} + \frac{\lambda  \norm{\Alpha}_{1}}{\left|\Dset\right|\left(\rho+\delta\right)}\enspace. \\
	\end{eqnarray*}
	where three possible sub-cases occur:
	\begin{eqnarray*}
		&\mbox{\textbf{Case 1.1} } & \boxed{\sum_{i \in \Cset} \d{i} < \rho < \rho + \delta }\\
		&& \max\left\{\sum_{i \in \Cset} \d{i}, \rho\right\} = \rho \mbox{ \&  } \max\left\{\sum_{i \in \Cset} d'_{i}, \rho^{\prime}\right\} = \rho+\delta \\
		&& \Rightarrow \frac{1}{\max\{\sum_{i \in \Cset} \d{i}, \rho\}} - \frac{1}{\max\{\sum_{i \in \Cset} d'_{i}, \rho^{\prime}\}} = \frac{1}{\rho} - \frac{1}{\rho + \delta} \\
		&\mbox{\textbf{Case 1.2} } & \boxed{\rho < \rho + \delta < \sum_{i \in \Cset} \d{i}}\\
		&&\max\left\{\sum_{i \in \Cset} \d{i}, \rho\right\} = \sum_{i \in \Cset} \d{i} \mbox{ \&  } \max\left\{\sum_{i \in \Cset} d'_{i}, \rho^{\prime}\right\} = \sum_{i \in \Cset} \d{i}\\
		&& \Rightarrow \frac{1}{\max\{\sum_{i \in \Cset} \d{i}, \rho\}} - \frac{1}{\max\{\sum_{i \in \Cset} d'_{i}, \rho^{\prime}\}} = \frac{1}{\sum_{i \in \Cset} \d{i}} - \frac{1}{\sum_{i \in \Cset} \d{i}} \\
		&=& 0\\
		&\mbox{\textbf{Case 1.3} } &  \boxed{\rho < \sum_{i \in \Cset} \d{i} < \rho + \delta} \\
		&& \underbrace{\max\left\{\sum_{i \in \Cset} \d{i}, \rho\right\} = \sum_{i \in \Cset} \d{i}}_{\sum_{i \in \Cset} \d{i} > \rho \Rightarrow \frac{1}{\sum_{i \in \Cset} \d{i}} < \frac{1}{\rho}} \mbox{ \&  } \underbrace{\max\left\{\sum_{i \in \Cset} d'_{i}, \rho^{\prime}\right\} = \rho+\delta}_{\sum_{i \in \Cset} \d{i} < \rho + \delta \Rightarrow\frac{1}{\sum_{i \in \Cset} \d{i}} > \frac{1}{\rho + \delta}}\\
		&& \Rightarrow \frac{1}{\max\{\sum_{i \in \Cset} \d{i}, \rho\}} - \frac{1}{\max\{\sum_{i \in \Cset} d'_{i}, \rho^{\prime}\}} = \frac{1}{\sum_{i \in \Cset} \d{i}} - \frac{1}{\rho + \delta} \\
		&& \Rightarrow \frac{1}{\max\{\sum_{i \in \Cset} \d{i}, \rho\}} - \frac{1}{\max\{\sum_{i \in \Cset} d'_{i}, \rho^{\prime}\}} \le \frac{1}{\rho} - \frac{1}{\rho + \delta} \\
	\end{eqnarray*}
	
	\myparagraph{Case 2.} When the extra point is outside \QR \& inside overlapping cells, for some $j$:
	\begin{eqnarray*}
		&&\sum_{i \in \Cset} d'_{i} = \sum_{i \in \Cset} \d{i} \mbox{ \&  } c'_{j} = \c{j} + 1 \Rightarrow\\
		&& \left|\sum_{i \in \Cset}\a{i} c'_{i} - \sum_{i \in \Cset} d'_{i}\right| + \lambda  \norm{\Alpha}_{1} = \left|\sum_{i \in \Cset}\a{i} \c{i}  + \a{j} - \sum_{i \in \Cset} \d{i}\right| + \lambda  \norm{\Alpha}_{1}\\
	\end{eqnarray*}
	\begin{eqnarray*}
		&\GS{r} s \le& \left|\frac{1}{\max\{\sum_{i \in \Cset} \d{i}, \rho\}} \left(\left|\sum_{i \in \Cset}\a{i} \c{i} - \sum_{i \in \Cset} \d{i}\right| + \lambda  \norm{\Alpha}_{1}\right) - \right.\\ && \left.\frac{1}{\max\{\sum_{i \in \Cset} d'_{i}, \rho^{\prime}\}} \left(\left|\sum_{i \in \Cset}\a{i} c'_{i} - \sum_{i \in \Cset} d'_{i}\right| + \lambda  \norm{\Alpha}_{1}\right)\right|\\
		&&\mbox{Triangle inequality:}\\
		&\le& \left|\left|\frac{1}{\max\{\sum_{i \in \Cset} \d{i}, \rho\}} \left(\sum_{i \in \Cset}\a{i} \c{i} - \sum_{i \in \Cset} \d{i}\right) \right| - \right.\\
		&& \left.\left|\frac{1}{\max\{\sum_{i \in \Cset} d'_{i}, \rho^{\prime}\}} \left(\sum_{i \in \Cset}\a{i} c'_{i} - \sum_{i \in \Cset} d'_{i}\right)\right|\right| + \\
		&& \left|\frac{1}{\max\{\sum_{i \in \Cset} \d{i}, \rho\}}  \lambda  \norm{\Alpha}_{1}  - \frac{1}{\max\{\sum_{i \in \Cset} d'_{i}, \rho^{\prime}\}}  \lambda  \norm{\Alpha}_{1} \right|\\
		&&\mbox{Reverse trianlge inequality:}\\
		&\le& \left|\frac{1}{\max\{\sum_{i \in \Cset} \d{i}, \rho\}} \left(\sum_{i \in \Cset}\a{i} \c{i} - \sum_{i \in \Cset} \d{i}\right)  - \right.\\
		&& \left. \frac{1}{\max\{\sum_{i \in \Cset} d'_{i}, \rho^{\prime}\}} \left(\sum_{i \in \Cset}\a{i} c'_{i} - \sum_{i \in \Cset} d'_{i} \right)\right| + \\
		&& \lambda  \norm{\Alpha}_{1} \cdot \left|\frac{1}{\max\{\sum_{i \in \Cset} \d{i}, \rho\}} - \frac{1}{\max\{\sum_{i \in \Cset} d'_{i}, \rho^{\prime}\}} \right|\\
		&&\mbox{Rearranging and factoring:} \\
		&\le&\left|\frac{1}{\max\{\sum_{i \in \Cset} \d{i}, \rho\}} \left(\sum_{i \in \Cset}\a{i} \c{i} - \sum_{i \in \Cset} \d{i} \right) - \right.\\ 
		&& \left.\frac{1}{\max\{\sum_{i \in \Cset} d'_{i}, \rho^{\prime}\}} \left( \left(\sum_{i \in \Cset}\a{i} \c{i} - \sum_{i \in \Cset} \d{i} \right) + \a{j}\right) \right| + \\
		&& \lambda  \norm{\Alpha}_{1} \cdot \left|\frac{1}{\max\{\sum_{i \in \Cset} \d{i}, \rho\}} - \frac{1}{\max\{\sum_{i \in \Cset} d'_{i}, \rho^{\prime}\}} \right|\\
		&\le&\left| \left(\sum_{i \in \Cset}\a{i} \c{i} - \sum_{i \in \Cset} \d{i} \right) \left(\frac{1}{\max\{\sum_{i \in \Cset} \d{i}, \rho\}}  - \frac{1}{\max\{\sum_{i \in \Cset} d'_{i}, \rho^{\prime}\}} \right) - \right.\\
		&& \left.\frac{\a{j}} {\max\{\sum_{i \in \Cset} d'_{i}, \rho^{\prime}\}} \right| + \\
		&& \lambda  \norm{\Alpha}_{1} \cdot \left|\frac{1}{\max\{\sum_{i \in \Cset} \d{i}, \rho\}} - \frac{1}{\max\{\sum_{i \in \Cset} d'_{i}, \rho^{\prime}\}} \right|\\
		&&\mbox{Triangle inequality:}\\
		&\le&\left| \sum_{i \in \Cset}\a{i} \c{i} - \sum_{i \in \Cset} \d{i} \right|\cdot \left|\frac{1}{\max\{\sum_{i \in \Cset} \d{i}, \rho\}}  - \frac{1}{\max\{\sum_{i \in \Cset} d'_{i}, \rho^{\prime}\}} \right| + \\
		&& \left|\frac{\a{j}} {\max\{\sum_{i \in \Cset} d'_{i}, \rho^{\prime}\}} \right| + \\
		&& \lambda  \norm{\Alpha}_{1} \cdot \left|\frac{1}{\max\{\sum_{i \in \Cset} \d{i}, \rho\}} - \frac{1}{\max\{\sum_{i \in \Cset} d'_{i}, \rho^{\prime}\}} \right|\\
		&\le&\left( \left|\sum_{i \in \Cset}\a{i} \c{i} - \sum_{i \in \Cset} \d{i} \right|+ \lambda  \norm{\Alpha}_{1}\right) \cdot \\
		&& \left|\frac{1}{\max\{\sum_{i \in \Cset} \d{i}, \rho\}}  - \frac{1}{\max\{\sum_{i \in \Cset} d'_{i}, \rho^{\prime}\}} \right| + \\ 
		&& \left|\frac{\a{i}}{\max\{\sum_{i \in \Cset} d'_{i}, \rho^{\prime}\}} \right|\enspace. \\
		&\le& \left(\left|\Dset\right| + \lambda  \norm{\Alpha}_{1} \right) \cdot \underbrace{\left(\frac{1}{\rho} - \frac{1}{\rho + \delta}\right)}_{\frac{\delta}{\rho\left(\rho+\delta\right)}} + \frac{1}{\rho+\delta} \\
		&\le& \frac{1}{\rho+\delta} + \frac{\lambda  \norm{\Alpha}_{1}}{\left|\Dset\right|\left(\rho+\delta\right)} + \frac{1}{\rho + \delta}\enspace. \\
	\end{eqnarray*} 
	where cases 1.1, 1.2 and 1.3 apply here as well, and result follows.
	\myparagraph{Case 3.} When the extra point is inside both \QR \& overlapping cells:
	\begin{eqnarray*}
		&&\sum_{i \in \Cset} d'_{i} = \sum_{i \in \Cset} \d{i} + 1 \mbox{ \&  } c'_{j} = \c{j} + 1\\
		&&\mbox{From case 2 result, we have the following:}\\
		&\GS{r} s \le& \Big|\frac{1}{\max\{\sum_{i \in \Cset} \d{i}, \rho\}} \left(\sum_{i \in \Cset}\a{i} \c{i} - \sum_{i \in \Cset} \d{i}\right)  - \\
		&& \frac{1}{\max\{\sum_{i \in \Cset} d'_{i}, \rho^{\prime}\}} \left(\sum_{i \in \Cset}\a{i} c'_{i} - \sum_{i \in \Cset} d'_{i} \right)\Big| + \\
		&& \lambda  \norm{\Alpha}_{1} \cdot \left|\frac{1}{\max\{\sum_{i \in \Cset} \d{i}, \rho\}} - \frac{1}{\max\{\sum_{i \in \Cset} d'_{i}, \rho^{\prime}\}} \right|\\
		&\le&\left|\frac{1}{\max\{\sum_{i \in \Cset} \d{i}, \rho\}} \left(\sum_{i \in \Cset}\a{i} \c{i} - \sum_{i \in \Cset} \d{i} \right) - \right.\\ 
		&& \left.\frac{1}{\max\{\sum_{i \in \Cset} d'_{i}, \rho^{\prime}\}} \left( \left(\sum_{i \in \Cset}\a{i} \c{i} - \sum_{i \in \Cset} \d{i} \right) + \left(\a{j} - 1\right)\right) \right| + \\
		&& \lambda  \norm{\Alpha}_{1} \cdot \left|\frac{1}{\max\{\sum_{i \in \Cset} \d{i}, \rho\}} - \frac{1}{\max\{\sum_{i \in \Cset} d'_{i}, \rho^{\prime}\}} \right|\\
		&\le&\left( \left|\sum_{i \in \Cset}\a{i} \c{i} - \sum_{i \in \Cset} \d{i} \right|+ \lambda  \norm{\Alpha}_{1}\right) \cdot \\
		&& \left|\frac{1}{\max\{\sum_{i \in \Cset} \d{i}, \rho\}}  - \frac{1}{\max\{\sum_{i \in \Cset} d'_{i}, \rho^{\prime}\}} \right| + \\ 
		&& \left|\frac{1 - \a{j}}{\max\{\sum_{i \in \Cset} d'_{i}, \rho^{\prime}\}} \right|\\
		&\le& \left(\left|\Dset\right| + \lambda  \norm{\Alpha}_{1} \right) \cdot \underbrace{\left(\frac{1}{\rho} - \frac{1}{\rho + 1}\right)}_{\frac{1}{\rho(\rho+1)}} + \frac{1}{\rho+\delta} \\
		&\le& \frac{1}{\delta(\rho+1)} + \frac{\lambda  \norm{\Alpha}_{1}}{\rho(\rho+1)} + \frac{1}{\rho + \delta}\enspace. 
	\end{eqnarray*}
	where again we have sub cases on the denominator.
	\begin{eqnarray*}
		&\mbox{\textbf{Case 3.1} } & \boxed{\sum_{i \in \Cset} \d{i} < \sum_{i \in \Cset} \d{i} + 1  < \rho < \rho + \delta }\\
		&& \max\left\{\sum_{i \in \Cset} \d{i}, \rho\right\} = \rho \mbox{ \&  } \max\left\{\sum_{i \in \Cset} d'_{i}, \rho^{\prime}\right\} = \rho+\delta\\
		&& \Rightarrow \frac{1}{\max\{\sum_{i \in \Cset} \d{i}, \rho\}} - \frac{1}{\max\{\sum_{i \in \Cset} d'_{i}, \rho^{\prime}\}} = \frac{1}{\rho} - \frac{1}{\rho + \delta} \\
		&\mbox{\textbf{Case 3.2} } & \boxed{\rho < \sum_{i \in \Cset} \d{i} < \rho + \delta < \sum_{i \in \Cset} \d{i} + 1  }\\
		&&\underbrace{\max\left\{\sum_{i \in \Cset} \d{i}, \rho\right\} = \sum_{i \in \Cset} \d{i}}_{\sum_{i \in \Cset} \d{i} > \rho \Rightarrow \frac{1}{\sum_{i \in \Cset} \d{i}} < \frac{1}{\rho} } \mbox{ \&  } \underbrace{\max\left\{\sum_{i \in \Cset} d'_{i}, \rho^{\prime}\right\} = \sum_{i \in \Cset} \d{i} + 1}_{\sum_{i \in \Cset} \d{i} + 1 > \rho + \delta \Rightarrow \frac{1}{\sum_{i \in \Cset} \d{i} + 1} < \frac{1}{\rho + \delta}}\\ 
		&& \Rightarrow \frac{1}{\max\{\sum_{i \in \Cset} \d{i}, \rho\}} - \frac{1}{\max\{\sum_{i \in \Cset} d'_{i}, \rho^{\prime}\}} = \frac{1}{\sum_{i \in \Cset} \d{i}} - \frac{1}{\sum_{i \in \Cset} \d{i} + 1}\\
		&& \Rightarrow \frac{1}{\max\{\sum_{i \in \Cset} \d{i}, \rho\}} - \frac{1}{\max\{\sum_{i \in \Cset} d'_{i}, \rho^{\prime}\}} \le \frac{1}{\rho} - \frac{1}{\rho + \delta}  \\
		&\mbox{\textbf{Case 3.3} } & \boxed{\rho < \sum_{i \in \Cset} \d{i} < \sum_{i \in \Cset} \d{i} + 1 < \rho + \delta }\\
		&&\underbrace{\max\left\{\sum_{i \in \Cset} \d{i}, \rho\right\} = \sum_{i \in \Cset} \d{i}}_{\sum_{i \in \Cset} \d{i} > \rho \Rightarrow \frac{1}{\sum_{i \in \Cset} \d{i}} < \frac{1}{\rho} } \mbox{ \&  } \underbrace{\max\left\{\sum_{i \in \Cset} d'_{i}, \rho^{\prime}\right\} = \rho + \delta}_{\sum_{i \in \Cset} \d{i} + 1 < \rho + \delta}\\
		&& \Rightarrow \frac{1}{\max\{\sum_{i \in \Cset} \d{i}, \rho\}} - \frac{1}{\max\{\sum_{i \in \Cset} d'_{i}, \rho^{\prime}\}} = \frac{1}{\sum_{i \in \Cset} \d{i}} - \frac{1}{\rho + \delta}\\
		&& \Rightarrow \frac{1}{\max\{\sum_{i \in \Cset} \d{i}, \rho\}} - \frac{1}{\max\{\sum_{i \in \Cset} d'_{i}, \rho^{\prime}\}} \le \frac{1}{\rho} - \frac{1}{\rho + \delta}  \\
		&\mbox{\textbf{Case 3.4} } & \boxed{\sum_{i \in \Cset} \d{i} < \rho < \rho + \delta < \sum_{i \in \Cset} \d{i} + 1 < \rho + 1}\\
		&& \underbrace{\max\left\{\sum_{i \in \Cset} \d{i}, \rho\right\} = \rho}_{\substack{\sum_{i \in \Cset} \d{i} < \rho \Rightarrow \frac{1}{\sum_{i \in \Cset} \d{i}} > \frac{1}{\rho}\\
				\sum_{i \in \Cset} \d{i} + 1 < \rho + 1\Rightarrow \frac{1}{\sum_{i \in \Cset} \d{i} + 1} > \frac{1}{\rho + 1}} } \mbox{ \&  } \underbrace{\max\left\{\sum_{i \in \Cset} d'_{i}, \rho^{\prime}\right\} = \sum_{i \in \Cset} \d{i} + 1}_{\substack{\sum_{i \in \Cset} \d{i} + 1 > \rho + \delta \Rightarrow\frac{1}{\sum_{i \in \Cset} \d{i} + 1} < \frac{1}{\rho + \delta}\\ \rho + 1 > \rho + \delta \Rightarrow \frac{1}{\rho + 1} < \frac{1}{\rho + \delta}}}\\
		&& \Rightarrow \frac{1}{\max\{\sum_{i \in \Cset} \d{i}, \rho\}} - \frac{1}{\max\{\sum_{i \in \Cset} d'_{i}, \rho^{\prime}\}} = \frac{1}{\rho} - \frac{1}{\sum_{i \in \Cset} \d{i} + 1} \\
		&& \Rightarrow \frac{1}{\max\{\sum_{i \in \Cset} \d{i}, \rho\}} - \frac{1}{\max\{\sum_{i \in \Cset} d'_{i}, \rho^{\prime}\}} \le \frac{1}{\rho} - \frac{1}{\rho + 1} \\	
	\end{eqnarray*}
	The final upper bound for the response-dependent global sensitivity becomes the bound computed for case 3, as it achieves the maximum over all possible cases.

\fi

\end{document}